\documentclass{article}

\usepackage{arxiv}

\usepackage[utf8]{inputenc}
\usepackage[T1]{fontenc}    
\usepackage{hyperref}       
\usepackage{url}            
\usepackage{booktabs}       
\usepackage{amsfonts}       

\usepackage{nicefrac}       
\usepackage{microtype}                   
\usepackage{graphicx}
\usepackage{verbatim}

\usepackage{amsmath,amssymb,amsthm}
\newtheorem{proposition}{Proposition}

\newtheorem{theorem}{Theorem}
\newtheorem{definition}{Definition}[section]

\usepackage{tikz}
\usetikzlibrary{calc}
\usetikzlibrary{shapes.geometric, positioning, arrows.meta}
\usetikzlibrary{matrix,calc}
\usetikzlibrary{fit,decorations.pathreplacing}

\usepackage{multirow}

\usepackage{enumitem}
\setlist[itemize]{labelindent=1em,leftmargin=2em,labelsep=0.5em}

\usepackage{braket}
\usepackage[english]{babel}

\usepackage{array}
\usepackage{tabularx}

\newcolumntype{L}[1]{>{\raggedright\arraybackslash}p{#1}}
\newcolumntype{Y}{>{\raggedright\arraybackslash}X}

\makeatletter
\def\blfootnote{\xdef\@thefnmark{}\@footnotetext}
\makeatother

\usepackage{fancyhdr}
\begin{document}

\title{Encrypted Redundancy as a Diagnostic Resource: Relational Diagnosis in Quantum Encrypted Cloning}

\author{
  \href{https://orcid.org/0000-0001-5186-0199}{\includegraphics[scale=0.06]{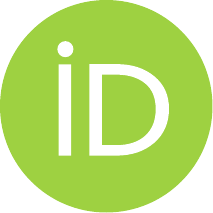}\hspace{1mm} Gabriele Gianini}\\
  University of Milano-Bicocca \\
  Milan, Italy \\
  \texttt{gabriele.gianini@unimib.it}\\ \vspace{1pt}
\and
\href{https://orcid.org/0000-0002-9585-7810}{\includegraphics[scale=0.06]{orcid.pdf}\hspace{1mm}
  Omar Hasan}\\
  INSA of Lyon\\
  Lyon, France \\
  \texttt{omar.hasan@liris.cnrs.fr}\\ \vspace{1pt}
\and
  \href{https://orcid.org/0000-0003-1737-6218}{\includegraphics[scale=0.06]{orcid.pdf}\hspace{1mm}
  Stelvio Cimato}\\
  University of Milano\\
  Milan, Italy\\
  \texttt{stelvio.cimato@unimi.it}\\ \vspace{1pt}
\and
  \href{https://orcid.org/0000-0002-9557-6496}{\includegraphics[scale=0.06]{orcid.pdf}\hspace{1mm}
  Ernesto Damiani}\\
  University of Milano\\
  Milan, Italy\\
  \texttt{ernesto.damiani@unimi.it}
}
\date{}
\maketitle

\begin{abstract}
Quantum encrypted cloning encodes an unknown state into several
encrypted components, each offering an alternative way to recover it later. We show that this redundancy can also serve for one-shot fault
diagnosis: instead of inspecting the encrypted state, we measure relational
Pauli observables testing consistency conditions imposed by the encoding. The canonical Yamaguchi--Kempf scheme encodes the input into \(n\) signal--key pairs, all carrying the same coherent Bell label, plus a transformed copy of the input qubit. Comparing the labels of two pairs yields a deterministic check that localizes an anomalous pair and identifies its Pauli-error class without revealing the label. Such checks cannot tell whether the signal or the key of that pair is faulty, and no measurement on the pairs can (we prove that they already generate the whole group of deterministic state-blind observables
supported there, of rank \(2(n-1)\)). This ambiguity matters: a faulty signal costs one redemption path, a faulty key threatens them all. 
However, retaining also the transformed input qubit contributes exactly two
further independent checks, at any multiplicity, and these suffice to identify
any single fault drawn from the single-qubit Pauli set. For three clones this gives six checks in all, and we prove that no smaller set
of state-blind observables achieves the same resolution. Across clone multiplicities, parity governs how many components must be measured jointly to reach this resolution, not the resolution itself.
These findings are then abstracted into a general framework --- state-blind
observables, deterministic healthy references, syndrome-induced fault
partitions, redemption-oriented sufficiency --- yielding \textit{relational
diagnosability}: a syndrome need not identify every fault, only enough of it to select a safe redemption path.
\keywords{
Quantum encrypted cloning;
encrypted quantum redundancy;
relational diagnosability;
one-shot relational diagnosis;
fault detection and localization;
Pauli-string measurements;
redemption paths;
quantum secret sharing;
}
\end{abstract}

\newpage
\tableofcontents

\section{Introduction}
\label{sec:intro}

An unknown quantum state cannot be copied in the ordinary sense
\cite{wootters1982single,dieks1982communication,barnum1996noncommuting}; Approximate and probabilistic cloning relax but
do not remove these restrictions
\cite{buzek1996quantum,duan1998probabilistic}. Consequently, classical
reliability mechanisms based on the availability of several simultaneously
readable replicas, such as replication and majority voting
\cite{patterson1988raid,reich2006lockss}, cannot be transferred unchanged to
quantum storage and custody.

Quantum encrypted cloning (QECL) provides a different form of redundancy by
changing what constitutes a ``copy.'' Rather than creating several readable
instances of an unknown state, QECL encodes the state into encrypted quantum
components that provide alternative opportunities for later reconstruction
\cite{yamaguchi2026encrypted,yamaguchi2026experimental,
ceara2026cloningencryptedquantumstates}. A suitable quantum decryption key can
be combined with one of several encrypted components to recover the protected
state. The construction remains compatible with the no-cloning theorem because
these alternatives are mutually exclusive: redeeming one clone consumes the
relevant quantum resource and prevents simultaneous recovery of independent
copies.

This recovery-oriented perspective naturally connects QECL with quantum secret
sharing (QSS), where an access structure specifies which subsets of distributed
shares can reconstruct a secret and which remain partially or completely
uninformative
\cite{hillery1999quantum,cleve1999share,gottesman2000theory,
ogawa2005quantum,lim2026encrypted}. More specifically, suitable QSS access
structures admit an \textit{operational reinterpretation} as QECL schemes \cite{gianini2026beyond}. Consider a
family of qualified sets \(Q_1,\ldots,Q_m\) with a nonempty common intersection
\(K\), where \(K\) alone cannot reconstruct the secret. Writing
\(C_i=Q_i\setminus K\), each qualified set takes the form
\(Q_i=K\cup C_i\). The common component \(K\) can then be interpreted as a
quantum decryption key and each residual component \(C_i\) as an encrypted
clone: \(C_i\) need not reveal the unknown state by itself, whereas
\(K\cup C_i\) constitutes an alternative recovery opportunity. Under this
interpretation, QSS \textit{access structures} become QECL \textit{redemption structures}\footnote{We use \emph{redemption} rather than \emph{recovery} throughout, because the two differ in a way that matters here. Classical replicas can be read repeatedly and independently; a QECL clone cannot. Redeeming one encrypted clone consumes the quantum resource that made it usable and
forecloses the alternatives, so the available redemption sets are mutually exclusive opportunities rather than interchangeable copies. The term
is borrowed from the redemption of a claim against a single underlying asset,
and it is what makes the diagnostic question of this paper worth asking: since one cannot try a path and fall back on another, the choice must be informed before it is made.} and
different overlap patterns induce different key--clone organizations.

\paragraph{The diagnostic question.} When such encrypted redundancy is deployed in a distributed storage or custody
setting~\cite{gianini2026distributed,gianini2026aftertheft}, however, a further operational question
arises: \emph{before selecting a redemption path, can one determine which of
the available encrypted resources remain trustworthy without accessing or
destroying the unknown quantum information?}

This question differs fundamentally from ordinary state verification. The
protected state is unknown, and individual encrypted components may
intentionally reveal no usable information about it. 
At the same time, the distributed components are not arbitrary independent
systems: they originate from a common coherent encoding and may consequently
\textit{satisfy relations that are independent of the logical payload}. Such relations provide a possible diagnostic reference even when the individual encrypted states themselves cannot be inspected.

This observation motivates a second use of encrypted redundancy. In addition
to providing \textit{alternative redemption paths}, the encoding may provide
\emph{diagnostic redundancy}: different encrypted components can act as mutual
witnesses of whether the joint resource still satisfies the constraints imposed
by the QECL construction. The resulting diagnosis is relational rather than
absolute. Instead of asking whether an individual component occupies a known
quantum state, we ask whether selected components continue to \textit{satisfy
state-independent relations expected of every valid encoding}.

Notice that a measurement may be blind to the logical state and yet have intrinsically random outcomes in the healthy encoding; such a measurement
provides no definite single-shot reference. We investigate diagnostic checks that are state blind but also have a \textit{deterministic healthy outcome} fixed by the encoding itself: faults can then be detected through departures from that relation without resolving the protected payload.

\paragraph{One-shot diagnosis and its relation to error correction.} The present work focuses on \emph{one-shot relational diagnosis}: the
diagnostic procedure acts on the currently available encoded resource without
requiring a previously recorded syndrome for the same unknown state or
repeated identical preparations. 

The operational objective of this investigation differs from that of \textit{conventional quantum error correction} (QEC). A QEC syndrome is normally used to infer an error and restore an encoded logical block. In QECL, complete physical error identification and correction are not always necessary. If the available diagnostic information
establishes that a path-specific encrypted clone is unreliable while another
redemption path remains viable, the affected path may simply be excluded.
Active correction becomes necessary when faults affect resources shared among
several redemption paths, when no suitable alternative remains, or when
preserving additional recovery opportunities is itself an objective. The
relevant diagnostic question is therefore not necessarily ``which physical
fault occurred?'' but rather \emph{has enough been learned about the faults to make a safe redemption decision?}

This objective is closer to \textit{classical system-level fault
diagnosis}~\cite{preparata1967connection}, where a faulty unit need only be confined to a bounded set~\cite{barsi1976theory}, and to \textit{error-locating
codes}~\cite{wolf1963errorlocating}, where the sub-block containing an error is identified but not corrected, than to conventional error correction; the
correspondence is examined in Section~\ref{subsec:classical-diagnosis}.

\paragraph{Contributions.} The contributions of this work are the following.
\begin{itemize}
\item
We formulate \emph{one-shot relational diagnosis} as a distinct use of
encrypted quantum redundancy and develop a general framework based on
state-blind observables, deterministic healthy references, syndrome-induced
fault equivalence classes, and redemption-oriented diagnostic sufficiency. This
leads to the notion of \emph{relational diagnosability} as a property of a
QECL protocol relative to its fault model, available measurements, and
redemption structure.

\item
We expose the basic diagnostic mechanisms through a prototypical duplicated-syndrome QECL
construction. 
\begin{itemize}
\item
With this we also illustrate the distinction between \textit{undetected harmful} and
\textit{redemption-irrelevant} errors.
\end{itemize}
\item
We derive complementary diagnostic mechanisms for the canonical
Yamaguchi--Kempf protocol and determine their limits. That protocol encodes the
input qubit into \(n\) signal--key pairs \(S_iK_i\), all carrying the same
coherently repeated Bell label, together with a transformed copy \(A\) of the
input itself; normal redemption uses one signal together with the complete key
block. 
\begin{itemize}
\item
Inter-pair checks compare the Bell labels carried by two different pairs
and, for arbitrary clone multiplicity, localize an anomalous pair and identify
the corresponding Pauli-error class. What such a comparison cannot do is say which member of the pair (the path-specific signal or the shared key) actually failed. We prove that this is not a shortcoming of the particular observables chosen but a structural limit: the inter-pair checks already generate the \emph{entire} group of deterministic state-blind observables supported on the \(n\) pairs, a group of rank \(2(n-1)\), so the signal--key ambiguity cannot be removed by any further measurement confined to the stored pairs. 
\item
This limitation can be removed only by involving \(A\), and exactly two independent checks do so, at every clone multiplicity; adding them resolves every single-qubit Pauli fault uniquely.
\item
For three clones this gives a six-generator syndrome, which we show to be
minimal: no five deterministic state-blind checks separate all single-qubit
Pauli faults, even though a counting argument would permit five. We also
characterize explicitly the residual equivalence classes of the cheaper
five-generator variant, and observe that a five-check set of strictly finer
partition can nonetheless be operationally worse.
\end{itemize}
\item
We clarify the relation between \textit{confidentiality} and \textit{diagnosis}. The odd--even
comparison connects \textit{relational diagnosability} with previously characterized
\textit{leakage properties}~\cite{gianini2026encrypted,gianini2026full,bai2026classification}: for odd clone
multiplicity a restricted leakage channel coexists with a low-weight
state-independent relation useful for diagnosis, and eliminating the former
also eliminates the latter. We show, however, that this is a trade-off against
the \emph{cost} of diagnosis rather than against diagnosability itself, since
complete single-fault resolution remains attainable at either parity, at higher
measurement weight. \textit{Secrecy}, \textit{recovery redundancy}, and \textit{diagnosability} are thus
distinct properties, and comparing protocols along the diagnostic axis requires
a cost model as well as a resolution criterion.
\end{itemize}

\paragraph{Structure of the paper.} Section~\ref{sec:distributed-custody} introduces the distributed-custody setting and the diagnostic hierarchy used throughout. Section~\ref{sec:prototype} illustrates the basic mechanism with a duplicated-syndrome construction, and Section~\ref{sec:canonical-protocol} develops the canonical Yamaguchi--Kempf case, including storage-only and source-assisted checks, their completeness and minimality properties, and the dependence of measurement cost on clone parity. Section~\ref{sec:relational-diagnosis} abstracts these results into a general framework for one-shot relational diagnosis; Section~\ref{sec:discussion} discusses their implications and limitations, and Section~\ref{sec:conclusions} concludes.

\paragraph{Scope.} Our focus is deliberately on the \textit{logical structure of one-shot diagnosis}
rather than on the optimization of its physical implementation. The Pauli-string
observables considered here can in principle be measured through
ancilla-assisted local or distributed constructions, but their resource-aware
compilation, fault-tolerant extraction, and integration with optimized local
quantum error correction constitute separate problems.

\section{Quantum Encrypted Cloning as a Distributed Custody Primitive}
\label{sec:distributed-custody}
This section fixes the distributed-custody setting needed for the diagnostic analysis: alternative redemption paths, their physical separation, and the hierarchy in which protocol-specific relational information sits.


\subsection{Encrypted Redundancy and Redemption Paths}
\label{subsec:redemption-paths}

Let \(V\) denote a QECL encoding that maps an unknown input state into a multipartite quantum system \(R\). The encoded information is not intended to be recovered from every subsystem of \(R\), but only from selected qualified subsets. We denote by
\(
\Gamma_{\mathrm{red}} = \{Q_1,\ldots,Q_m\}
\)
the family of redemption sets, where each \(Q_i \subseteq R\) supports a recovery map \(\mathcal{D}_i\) satisfying
\[
\mathcal{D}_i\!\left(\operatorname{Tr}_{R \setminus Q_i}
\left[V \rho V^\dagger\right]\right) = \rho
\]

for every input state \(\rho\). The elements of \(\Gamma_{\mathrm{red}}\) therefore represent alternative future opportunities for recovering the same unknown quantum information.

In the simplest encrypted-cloning architectures, the redemption sets share a common quantum key \(K\) and differ only in the encrypted clone that completes the recovery configuration,
\[
Q_i = K \cup C_i .
\]
This form separates path-specific resources from shared ones. A fault on a clone \(C_i\) may remove only \(Q_i\), whereas a fault on the shared key \(K\) can affect every redemption path containing it. More generally, a subsystem is operationally critical according to the redemption sets in which it appears.

The diagnostic objective is therefore redemption-oriented: the relevant question is not whether every component remains in its ideal physical state, but whether the available information is sufficient to identify viable elements of \(\Gamma_{\mathrm{red}}\), exclude unsafe ones, or determine when correction is required.

\subsection{Distribution and Independent Protection}
\label{subsec:distribution-protection}

Once the QECL encoding has created the family of redemption resources
associated with \(\Gamma_{\mathrm{red}}\), the corresponding subsystems can be
physically separated and assigned to distinct storage sites. Physical
separation is the mechanism by which the \textit{logical redundancy}
represented by \(\Gamma_{\mathrm{red}}\) is converted into \textit{operational
redundancy}. Let \(R=R_1\cdots R_N\) be the encoded subsystems and let
\(\ell(R_j)\) denote the site holding \(R_j\). Writing
\(\ell(C_i)=\{\ell(R_j):R_j\in C_i\}\) for the sites of the path-specific
components of \(Q_i=K\cup C_i\), the natural requirement on the distribution
map is that no single failure domain intersect \(\ell(C_i)\) for every \(i\),
so that no localized outage removes all recovery opportunities at once. The
shared key block \(K\) escapes this requirement by construction, since it
belongs to every redemption set.

Accordingly, physical protection need not be uniform across components: the same local error rate can have different consequences depending on the component's role in the redemption structure. A broader treatment of transmission, storage, protection, and redemption strategies is given in~\cite{gianini2026distributed}.

Independence of failure domains, moreover, concerns the mechanisms of failure
and not the consequences of a failure once it has occurred. The components are
not informationally independent: by construction they retain the
\textit{state-independent quantum relations} that will later be interrogated,
and a fault confined to one site, though local in origin, is non-local in
effect, since it displaces the joint resource from a relation that every valid
encoding satisfies. Two caveats follow. Common-mode or correlated disturbances
may affect several components at once, and relational checks are more sensitive
to differential faults than to transformations acting coherently on all of
them. And the joint measurements that extract a relational syndrome are
themselves operations, able to propagate errors between the components they
couple. We assume throughout ideal syndrome extraction and a single-qubit Pauli
fault model; both idealizations are discussed in
Section~\ref{subsec:limitations}.

After distribution the system consists of spatially separated, independently
maintained resources whose collective usefulness is determined by the surviving
redemption sets. Diagnostic information may then be produced locally, through
device telemetry or quantum-error-correction syndromes, or relationally,
through measurements involving components at different sites. Physical
separation is introduced to obtain independent protection, yet the encoded
components retain relations that can still be interrogated across it; that
possibility is the subject of this work.


\begin{figure}
\centering
\begin{tikzpicture}[
    node distance=5mm,
    levelbox/.style={
        draw,
        rounded corners,
        align=left,
        text width=0.72\linewidth,
        minimum height=11mm,
        inner sep=3mm
    },
    focusbox/.style={
        levelbox,
        very thick
    },
    arrow/.style={
        ->,
        thick
    }
]

\node[levelbox] (L0) {
\textbf{Level 0 --- Device health}\\
Classical hardware status, alarms, calibration drift, environmental and control information.
};

\node[levelbox, above=of L0] (L1) {
\textbf{Level 1 --- Physical-error telemetry}\\
QEC syndromes, leakage or erasure flags, loss indicators, and other physical fault signals.
};

\node[levelbox, above=of L1] (L2) {
\textbf{Level 2 --- Local logical-health inference}\\
Estimated integrity of an individually protected logical component, obtained from local telemetry and noise information.
};

\node[focusbox, above=of L2] (L3) {
\textbf{Level 3 --- Protocol-specific relational diagnostics}\\
State-blind consistency relations among QECL components:
protocol consistency, inter-component diagnosis, and fault localization.\\
\textit{Primary focus of this work.}
};

\node[levelbox, above=of L3] (L4) {
\textbf{Level 4 --- Redemption-path health inference}\\
Assessment, ranking, exclusion, or selection of the qualified redemption sets \(Q_i\).
};

\draw[arrow] (L0.north) -- (L1.south);
\draw[arrow] (L1.north) -- (L2.south);
\draw[arrow] (L2.north) -- (L3.south);
\draw[arrow] (L3.north) -- (L4.south);

\node[rotate=90, anchor=south] at ([xshift=-9mm]L2.west) {
\small increasing operational abstraction
};

\node[align=center, anchor=west] at ([xshift=5mm]L1.east) {
\small primarily\\
\small local
};

\node[align=center, anchor=west] at ([xshift=5mm]L3.east) {
\small relational
};

\node[align=center, anchor=west] at ([xshift=5mm]L4.east) {
\small system-level\\
\small decision
};

\end{tikzpicture}
\caption{Hierarchy of diagnostic information in distributed QECL. Levels~0--2 summarize local device and logical-health information, Level~3 denotes protocol-specific relational diagnostics, and Level~4 translates the available evidence into redemption-path decisions.}
\label{fig:diagnostic-hierarchy}
\end{figure}


\subsection{A Hierarchy of Diagnostic Information}
\label{subsec:diagnostic-hierarchy}

The distributed setting gives rise to diagnostic information at several levels (Figure~\ref{fig:diagnostic-hierarchy}). Levels~0--2 are primarily local, Level~3 is induced by the QECL protocol itself, and Level~4 turns the available evidence into a system-level redemption decision.

\begin{description}
\item[Level 0: Device health.] Classical information about the hardware supporting a quantum component, such as alarms or memory-device status.

\item[Level 1: Physical-error telemetry.] Quantum-specific evidence about faults affecting a stored or transmitted resource, including signals from the underlying protection layer.

\item[Level 2: Local logical-health inference.] An estimate of the integrity of an individually protected logical component, obtained from local telemetry, syndrome information, and an assumed noise model.

\item[Level 3: Protocol-specific \textit{relational} diagnostics.] State-independent relations imposed by the QECL encoding can be interrogated without resolving the protected logical state. We distinguish three increasing levels of resolution:
\begin{itemize}
\item \textit{protocol consistency}: a prescribed relation is tested, and a violation implicates its support without attributing the fault;
\item \textit{inter-component diagnosis}: overlapping relations confine the fault to a subset smaller than the support of an individual check;
\item \textit{fault localization}: the syndrome identifies a component or, when full separation is impossible, a well-defined diagnostic equivalence class, possibly together with an error type.
\end{itemize}
The progression is one of resolution rather than intrusiveness: all three use state-independent relations and leave the protected logical state unresolved.

\item[Level 4: Redemption-path health inference.] The available evidence is translated into statements about the viability of the qualified sets in \(\Gamma_{\mathrm{red}}\). If \(\mathcal I\) denotes the information collected from Levels~0--3, one may consider
\[
p_i^{\mathrm{red}} = \Pr\!\left(Q_i\ \text{successfully redeems the protected state}\mid\mathcal I\right),
\]
and rank, exclude, or select redemption paths accordingly.
\end{description}

The present work focuses on Level~3. Level~4 provides its operational criterion: relational information is useful insofar as it contributes to a safe decision about one or more future redemption paths.

\section{A Prototypical Duplicated-Syndrome QECL Scheme}
\label{sec:prototype}

Before introducing the general formalism, we illustrate the mechanism on a deliberately simple QECL construction modelled on the three-qubit repetition code and the repetition blocks underlying the Shor code~\cite{shor1995scheme}. The signals carry a coherently replicated binary value, and the checks comparing them are the familiar parity generators \(Z_{S_i}Z_{S_j}\). Unlike an ordinary repetition code, however, these signals are encrypted components rather than readable copies and become usable only in combination with a key.

The example isolates how redundancy introduced for alternative redemption paths can simultaneously generate deterministic relations among encrypted components. It also makes transparent the distinction between fault detection, localization, redemption-irrelevant errors, and common-mode faults that preserve the tested relations.

\subsection{Encoding and Redemption Structure}
\label{subsec:prototype-encoding}

Let the unknown input qubit be
\(
|\psi\rangle_A=\alpha|0\rangle_A+\beta|1\rangle_A 
\),
where \(A\) will subsequently play the role of a common quantum key. Consider first two signal qubits \(S_1\) and \(S_2\). The duplicated-syndrome encoding produces the joint state
\[
|\Psi^{(2)}\rangle
=
\frac{1}{\sqrt{2}}
\left(
|\psi\rangle_A |00\rangle_{S_1S_2}
+
Z_A|\psi\rangle_A |11\rangle_{S_1S_2}
\right).
\]
The two signal qubits therefore carry the same coherently encoded binary syndrome: both are \(0\) in the branch containing \(|\psi\rangle_A\) and both are \(1\) in the branch containing \(Z_A|\psi\rangle_A\).

Each pair
\[
Q_i=\{A,S_i\}, \qquad i\in\{1,2\},
\]
constitutes a redemption set. Recovery through \(Q_i\) is obtained by applying a controlled-\(Z\) operation from \(S_i\) to \(A\). Indeed,
\[
\operatorname{CZ}_{S_i\rightarrow A}
|\Psi^{(2)}\rangle
=
|\psi\rangle_A
\otimes
\frac{|00\rangle_{S_1S_2}+|11\rangle_{S_1S_2}}{\sqrt{2}}
\]
up to the irrelevant disposition of the signal subsystem after redemption. Thus either signal can be combined with the common key \(A\) to recover the same unknown state.
In the terminology of Subsection~\ref{subsec:redemption-paths}, the redemption family is therefore
\[
\Gamma_{\mathrm{red}}
=
\left\{
Q_1,Q_2
\right\},
\qquad
Q_i=\{A,S_i\}.
\]
The qubit \(A\) is shared by all redemption paths and plays the role of the common key, whereas \(S_1\) and \(S_2\) are path-specific encrypted clones. A failure affecting \(S_i\) can therefore invalidate \(Q_i\) while leaving the alternative path potentially usable; a failure affecting \(A\), by contrast, threatens both redemption opportunities.

The same construction extends directly to \(m\) encrypted signals,
\[
|\Psi^{(m)}\rangle
=
\frac{1}{\sqrt{2}}
\left(
|\psi\rangle_A |0\rangle^{\otimes m}
+
Z_A|\psi\rangle_A |1\rangle^{\otimes m}
\right),
\]
with redemption sets
\[
Q_i=\{A,S_i\},
\qquad
i=1,\ldots,m.
\]
The important structural feature is not merely the multiplicity of the sets \(Q_i\), but \textit{the coherent repetition of the same syndrome} across all signal qubits.
Equivalently, the encoding can be viewed as a coherent association between
two operator branches acting on the source qubit and two repetition states
of the signal register:
\begin{eqnarray*}
I_A|\psi\rangle_A
&\longleftrightarrow&
|0\rangle_{S_1}\otimes\cdots\otimes|0\rangle_{S_m},
\\
Z_A|\psi\rangle_A
&\longleftrightarrow&
|1\rangle_{S_1}\otimes\cdots\otimes|1\rangle_{S_m}.
\end{eqnarray*}
Thus the binary branch label associated with the action of \(I_A\) or
\(Z_A\) is coherently repeated across all signal qubits. As a consequence,
every valid encoded state satisfies the pairwise relations
\[
Z_{S_i}Z_{S_j}=+1,
\qquad i\neq j,
\]
\textit{independently of the unknown state} \(|\psi\rangle_A\).

These state-independent relations are not required for redemption itself; they arise from the multiplicity of alternative redemption components and can be used as diagnostic consistency checks.

\subsection{Two-Clone Relational Diagnosis vs.\ Three-Clone Single-Fault Localization}
\label{subsec:clone-detection-localization}

For two encrypted clones, the duplicated syndrome provides the deterministic relation
\[
G_{12}=Z_{S_1}Z_{S_2}, \qquad G_{12}|\Psi^{(2)}\rangle=|\Psi^{(2)}\rangle .
\]
A single \(X\) or \(Y\) error on either signal anticommutes with \(G_{12}\) and therefore changes the measured parity from \(+1\) to \(-1\), whereas a \(Z\) error leaves the parity unchanged. Thus
\[
X_{S_i},Y_{S_i}\Longrightarrow G_{12}=-1,
\qquad
Z_{S_i}\Longrightarrow G_{12}=+1 .
\]
The distinction is already redemption oriented. An \(X\) or \(Y\) component on the signal can propagate through the decoder and corrupt the recovered state, whereas a signal \(Z\) error does not affect the state redeemed on \(A\). The parity check therefore detects the relevant single-signal Pauli classes without inspecting the unknown logical state.

With only two clones, however, the outcome \(G_{12}=-1\) establishes an inconsistency but does not identify whether \(S_1\) or \(S_2\) is responsible. This changes with three encrypted clones. For
\[
|\Psi^{(3)}\rangle
=
\frac{1}{\sqrt{2}}
\left(
|\psi\rangle_A|000\rangle_{S_1S_2S_3}
+
Z_A|\psi\rangle_A|111\rangle_{S_1S_2S_3}
\right),
\]
consider the two independent checks
\[
G_{12}=Z_{S_1}Z_{S_2},
\qquad
G_{23}=Z_{S_2}Z_{S_3}.
\]
Under the single-\(X/Y\)-fault assumption, their outcomes are
\[
\begin{array}{c|cc}
\text{fault location} & G_{12} & G_{23} \\
\hline
\text{none} & +1 & +1 \\
S_1 & -1 & +1 \\
S_2 & -1 & -1 \\
S_3 & +1 & -1
\end{array}
\]
so that each signal location produces a distinct syndrome. This is the content of the following elementary statement.

\begin{proposition}[Detection versus localization in the duplicated-syndrome scheme]
\label{prop:prototype-detection}
In the duplicated-syndrome encoding with \(m\) signals, the checks
\(G_{i,i+1}=Z_{S_i}Z_{S_{i+1}}\) are deterministic and state blind, with healthy
outcome \(+1\). Under a single-fault assumption restricted to \(X\)- and
\(Y\)-type signal errors, two signals suffice to \emph{detect} a fault but not
to localize it, whereas three signals produce a distinct syndrome for each
signal location and therefore \emph{localize} it.
\end{proposition}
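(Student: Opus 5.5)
We verify the three claims in turn, working directly from the form of \(|\Psi^{(m)}\rangle\) and using only Pauli commutation relations.

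\emph{Determinism and state blindness.} Write \(|\Psi^{(m)}\rangle=\tfrac{1}{\sqrt2}\bigl(|\psi\rangle_A|0\rangle^{\otimes m}+Z_A|\psi\rangle_A|1\rangle^{\otimes m}\bigr)\). The operator \(G_{i,i+1}=Z_{S_i}Z_{S_{i+1}}\) acts trivially on \(A\). On each signal branch it acts as \((\pm1)^2=+1\), because in a given branch all signals carry the same bit. Hence \(G_{i,i+1}|\Psi^{(m)}\rangle=|\Psi^{(m)}\rangle\) for every \(\alpha,\beta\). The outcome \(+1\) is therefore obtained with certainty. Moreover, \(\langle\Psi^{(m)}|G_{i,i+1}|\Psi^{(m)}\rangle=1\) does not depend on \(|\psi\rangle\), so the measurement reveals nothing about the payload; this is what we mean by state blind.

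\emph{Syndromes of single faults.} Let a fault \(E\in\{X_{S_k},Y_{S_k}\}\) act on the healthy state. Since \(G\) is a product of \(Z\)'s, each such \(E\) anticommutes with \(G_{i,i+1}\) exactly when \(k\in\{i,i+1\}\), and commutes with it otherwise. Consequently \(E|\Psi^{(m)}\rangle\) is an eigenvector of \(G_{i,i+1}\) with eigenvalue \(-1\) if \(k\in\{i,i+1\}\) and \(+1\) otherwise. Each check thus returns a deterministic syndrome bit, and the full syndrome depends only on the fault location \(k\). In particular it is the same for \(X_{S_k}\) and \(Y_{S_k}\).

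\emph{Case \(m=2\).} There is only one check, \(G_{12}\). Every admissible fault gives \(G_{12}=-1\), while the healthy state gives \(+1\), so any fault is detected. However, \(S_1\) and \(S_2\) produce the same syndrome, so no function of the outcome can localize the fault. One might object that some other check could help. Restricted to the checks of the statement, this is immediate, since only \(G_{12}\) exists. To address the broader point, we also note that \(X_{S_1}X_{S_2}\) stabilizes \(|\Psi^{(2)}\rangle\). It follows that \(X_{S_1}|\Psi^{(2)}\rangle=X_{S_2}|\Psi^{(2)}\rangle\), so these two faulty states are literally identical and no measurement whatsoever can distinguish them.

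\emph{Case \(m=3\).} The checks are \(G_{12}\) and \(G_{23}\). By the anticommutation rule above, a fault at \(S_1\), \(S_2\), \(S_3\) produces the syndromes \((-,+)\), \((-,-)\), \((+,-)\) respectively, and no fault produces \((+,+)\). These four syndromes are pairwise distinct, which reproduces the table in the text, so every single fault is localized.

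The main obstacle is only one of interpretation: making precise that, for \(m=2\), the failure of localization is genuine rather than an artefact of the chosen check. The operator-identity argument in the \(m=2\) case settles this. Everything else is routine Pauli bookkeeping.
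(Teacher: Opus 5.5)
Your verification of determinism and state blindness, the anticommutation rule for \(X_{S_k},Y_{S_k}\) against \(Z_{S_i}Z_{S_{i+1}}\), and the \(m=2\) and \(m=3\) syndromes is correct. It is the same direct Pauli bookkeeping the paper uses. The problem is the supplementary step you single out as the one that ``settles'' genuine non-localization for \(m=2\): \(X_{S_1}X_{S_2}\) does \emph{not} stabilize \(|\Psi^{(2)}\rangle\). Directly,
\[
X_{S_1}X_{S_2}|\Psi^{(2)}\rangle=\tfrac{1}{\sqrt2}\bigl(Z_A|\psi\rangle_A|00\rangle_{S_1S_2}+|\psi\rangle_A|11\rangle_{S_1S_2}\bigr),
\]
which is the encoding of \(Z|\psi\rangle\), not of \(|\psi\rangle\); for \(|\psi\rangle=|1\rangle\) it equals \(-|\Psi^{(2)}\rangle\). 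So \(X_{S_1}X_{S_2}\) acts as a logical \(Z\), and the actual stabilizer element is \(Z_AX_{S_1}X_{S_2}\). Consequently \(X_{S_2}|\Psi^{(2)}\rangle=Z_AX_{S_1}|\Psi^{(2)}\rangle\), and the overlap of the two faulty states is \(\langle\psi|Z|\psi\rangle\). They are therefore not identical, and for \(|\psi\rangle=|+\rangle\) they are orthogonal. Your assertion that ``no measurement whatsoever can distinguish them'' is thus false as argued.

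What is true, and what the remark should say, is that \(X_{S_1}X_{S_2}\) (and likewise \(Y_{S_1}Y_{S_2}\)) commutes with the whole stabilizer group \(\langle Z_{S_1}Z_{S_2},\,Z_AX_{S_1}X_{S_2}\rangle\). The deterministic state-blind Pauli checks are exactly the stabilizer elements. Hence every such check assigns the same syndrome to \(X_{S_1}\) and \(X_{S_2}\), and to \(Y_{S_1}\) and \(Y_{S_2}\). Any measurement that separates them must in effect distinguish \(|\psi\rangle\) from \(Z|\psi\rangle\), that is, read logical information. No input-independent procedure can do this, because the two faulty states coincide (up to a global phase) for \(|\psi\rangle=|0\rangle\). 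With this correction your proof is complete. Alternatively, simply drop the remark: the proposition, read as a statement about the checks \(G_{i,i+1}\), does not need it.
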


The syndrome structure is that of the three-qubit repetition code, but its operational use is different: the signals define alternative redemption paths rather than a redundant encoding to be corrected in place. Once an anomalous \(S_i\) is localized, the corresponding path can be excluded whenever another qualified path remains available.

\subsection{From Redemption Redundancy to Diagnostic Redundancy}
\label{subsec:diagnostic-redundancy}

The repetition-like syndrome does not detect every physical deviation, nor should this be required. Diagnostic usefulness must instead be assessed relative to the effect of an error on future redemption. In the present construction, single-signal Pauli faults fall naturally into two classes: \(X\)- and \(Y\)-type errors are detected by the parity checks and can corrupt redemption, whereas \(Z\)-type errors remain undetected but are harmless for the recovered state.

More generally, the parity checks detect differences among the bit-flip components carried by the signals. Writing a Pauli error on the signal register as
\[
E_S=
\bigotimes_{i=1}^{m}
X_{S_i}^{b_i}Z_{S_i}^{c_i},
\qquad
b_i,c_i\in\{0,1\},
\]
the nearest-neighbor check \(G_i=Z_{S_i}Z_{S_{i+1}}\) acquires the syndrome
\(
g_i=(-1)^{b_i\oplus b_{i+1}}
\).

The checks therefore reveal differential \(X/Y\)-type faults, while transformations that act identically on all compared signals preserve the repetition relations. This is the familiar limitation of repetition-code syndromes~\cite{shor1995scheme}.

For the diagnostic purpose considered here, the relevant classes can be summarized as
\[
\begin{array}{c|c|c}
\text{error class} &
\text{relational syndrome} &
\text{redemption consequence} \\
\hline
\text{signal } Z\text{-type only} &
\text{undetected} &
\text{harmless} \\
\text{differential } X/Y\text{-type} &
\text{detected} &
\text{potentially harmful} \\
\text{common-mode } X/Y\text{-type} &
\text{undetected} &
\text{harmful}
\end{array}
\]
The last class represents a genuine ambiguity of the relational checks, although its practical relevance depends on the deployment assumptions. If encrypted clones are stored at distinct locations or placed in sufficiently independent failure domains, simultaneous harmful faults affecting all compared signals can be substantially less likely than isolated faults. Under an independent error model with harmful bit-flip probability \(p_i\) on signal \(S_i\), the probability of the corresponding collective event scales as
\(
\prod_i p_i
\),
and reduces to \(p^m\) in the identically distributed case.

\section{Relational Diagnosis in the Canonical Yamaguchi--Kempf Protocol}
\label{sec:canonical-protocol}

The canonical encrypted-cloning protocol of Yamaguchi and Kempf~\cite{yamaguchi2026encrypted} contains a richer form of coherent redundancy than the prototype. Its diagnostic structure comes from a Bell label that is repeated across signal--key pairs and coherently associated with a Pauli transformation of the source qubit; relations between repeated labels can therefore be state independent without revealing the label itself.

We use \(n=3\) as the main working case because it is the smallest multiplicity with a nontrivial localization syndrome across several redemption paths. Relations that extend to arbitrary \(n\), and the parity dependence of their measurement cost, are treated in Section~\ref{sec:odd-even-diagnostics}.

\subsection{Protocol Structure and Three-Clone Encoding}
\label{subsec:canonical-structure}

The diagnostic properties of the Yamaguchi and Kempf canonical protocol follow from two structural features that should be distinguished from the outset. First, each encrypted signal \(S_i\) is associated with a key component \(K_i\), and the corresponding pair \(S_iK_i\) carries one of four Bell labels. Second, the \emph{same} Bell label is coherently repeated across all such pairs. Redemption, however, is not pairwise: a signal \(S_i\) is redeemed using the complete key block \(K_1K_2K_3\). The resulting asymmetry between path-specific signal qubits and shared key components will become important for the interpretation of diagnostic syndromes.

\subsubsection{Encoding Structure for Three Clones}
\label{subsec:canonical-encoding}

Let
\(
|\psi\rangle_A=\alpha|0\rangle_A+\beta|1\rangle_A
\)
be the unknown input qubit. We denote the Pauli operators by
$\sigma_0=I$,
$\sigma_1=X$,
$\sigma_2=Y$,
$\sigma_3=Z$,
and introduce the Bell basis
\[
|\phi_\mu\rangle
=
(\sigma_\mu\otimes I)|\Phi^+\rangle,
\qquad
|\Phi^+\rangle
=
\frac{|00\rangle+|11\rangle}{\sqrt{2}}.
\]
Up to convention-dependent phase factors \(\alpha_\mu\), the 3-clone encoded state can be written as
\[
|\Psi_{\mathrm{enc}}^{(3)}\rangle
=
\frac{1}{2}
\sum_{\mu=0}^{3}
\alpha_\mu^{-1}
\sigma_\mu^{(A)}|\psi\rangle_A
\otimes
|\phi_\mu\rangle_{S_1K_1}
\otimes
|\phi_\mu\rangle_{S_2K_2}
\otimes
|\phi_\mu\rangle_{S_3K_3}.
\]
Following the original phase convention~\cite{yamaguchi2026encrypted}, we take
\[
\alpha_0=1,
\qquad
\alpha_1=\alpha_3=i,
\qquad
\alpha_2=-i^{\,n+1}.
\]
For the three-clone case considered explicitly below, this gives
\(\alpha_0=1\), \(\alpha_1=i\), \(\alpha_2=-1\), \(\alpha_3=i\).
The precise phase convention is not essential for the relational checks
considered below, but fixing it will be useful when writing the corresponding
encoded states and observables explicitly.

What matters diagnostically is the coherent association between the Pauli transformation \(\sigma_\mu^{(A)}\) acting on the source register and the identical Bell label \(\mu\) appearing in every signal--key pair.
The diagnostic problem is to identify observables that test relations implied by the common value of \(\mu\) without learning \(\mu\) itself.

\subsubsection{Repeated Bell Labels and Redemption Structure}
\label{subsec:bell-labels-redemption}

Writing the label of pair \(S_iK_i\) as \(\mu_i\), the ideal encoding is supported on the sector
\(
\mu_1=\mu_2=\mu_3
\),
while remaining in coherent superposition over the common value. The repeated object is therefore a two-bit Bell label, rather than the computational-basis syndrome of the prototype.

For three clones, the storage register is
\[
R=S_1K_1S_2K_2S_3K_3.
\]
Let
\[
K=K_1K_2K_3
\]
denote the complete quantum key block. The normal storage-only redemption sets are
\[
Q_i=K\cup\{S_i\},
\qquad
i\in\{1,2,3\}.
\]
Thus each \(S_i\) provides an alternative encrypted clone, whereas the key block \(K\) is shared by all three redemption paths,
\[
\Gamma_{\mathrm{red}}
=
\{Q_1,Q_2,Q_3\}.
\]
A suitable decoding transformation acting on \(Q_i\) restores the unknown state using the selected signal together with the complete key.\footnote{
The family \(\Gamma_{\mathrm{red}}\) is a custody convention rather than a
property of the code. The qualified sets of the underlying access structure are
characterized in~\cite{gianini2026encrypted}: a subset of the storage register
is authorized when it contains both members of at least one signal--key pair
and at least one member of every remaining pair. The two conditions have
different origins. 
\begin{itemize}
\item
A complete pair is needed because the Bell label is stored
in the correlation \emph{within} a pair and neither member carries information
about it on its own. 
\item
A representative of every other pair is needed for a
different reason: 
\begin{itemize}
\item
surrendering a pair in its entirety leaves a perfect record
of the label in the environment, 
\item
which decoheres the branch superposition, and that superposition is where the protected state resides once the transformed
source register has been discarded;
\item
the remaining representatives are therefore
held custodially rather than informatively.
\end{itemize}
\end{itemize}

Consequently, sets outside \(\Gamma_{\mathrm{red}}\) are also qualified: for
three clones, \(\{S_1,K_1,S_2,S_3\}\) and \(\{S_1,K_1,S_2,K_3\}\) both satisfy
the condition. Reserving the complete key block for redemption is a policy
choice, appropriate when the key components are held by a distinct custodian,
and \(\Gamma_{\mathrm{red}}\) should be read accordingly throughout: statements
below about paths lost or preserved are relative to that policy, not to
recoverability in principle.}

This access structure already suggests a diagnostic asymmetry. 
\begin{itemize}
\item
A fault confined to \(S_i\) is primarily associated with the single path \(Q_i\), so localizing that signal can justify excluding one redemption opportunity while retaining the others. 
\item
A fault on a key component \(K_j\), by contrast, affects a subsystem shared by every \(Q_i\) and therefore threatens the complete family of normal redemption
paths, although recovery may remain possible outside
\(\Gamma_{\mathrm{red}}\), through a qualified set that substitutes \(S_j\) for \(K_j\) as the representative of pair \(j\). 
\end{itemize}
Diagnostic localization to a signal--key pair is consequently useful, but distinguishing the signal from the key within that pair can have additional operational value.

\subsubsection{Signal, Key, and Source Registers}
\label{subsec:signal-key-source-registers}

For the subsequent analysis we distinguish the signal register \(S=S_1S_2S_3\), containing the path-specific encrypted clones; the key register \(K=K_1K_2K_3\), shared by the normal storage-only redemption paths; and the transformed source qubit \(A\). Signal-only and key-only observables will first be considered and shown not to provide the required deterministic references. The repeated Bell-label association instead supports inter-pair checks, while access to \(A\) provides a complementary source-assisted sector.

\subsection{Signal-Only and Key-Only Diagnostics}
\label{sec:signal-key-diagnostics}

A natural first attempt is to search for diagnostic relations confined to the
signal register or to the key register, in analogy with the signal-only parity
checks of the prototypical construction. The two encodings, however, place
their redundant labels in structurally different degrees of freedom. 
\begin{itemize}
\item
In the
prototype, the two operator branches \(I_A|\psi\rangle_A\) and
\(Z_A|\psi\rangle_A\) are coherently associated with the repetition states
\(|0\rangle^{\otimes n}\) and \(|1\rangle^{\otimes n}\), so consistency of
the replicated label can be tested directly through signal-only parities. \item
In
the Yamaguchi--Kempf protocol, instead, the four branches
\(\sigma_\mu^{(A)}|\psi\rangle_A\) are coherently associated with repeated Bell states
\(|\phi_\mu\rangle_{S_iK_i}\). The label \(\mu\) is therefore encoded in the association between \(S_i\) and
\(K_i\), rather than in either subsystem separately.\footnote{Within each coherent branch
the pair occupies the Bell state \(|\phi_\mu\rangle\), a joint property that no
marginal records. Indeed \(\rho_{S_i}\), \(\rho_{K_i}\) and, for \(n\geq2\),
\(\rho_{S_iK_i}\) itself are all maximally mixed, so \(\mu\) is not stored in any
reduced state of the stored components; it is defined only relative to the branch
structure shared by every pair.}
\end{itemize}
This already suggests that signal-only and key-only measurements
will not directly test the repeated Bell-label structure: useful
diagnostic observables should instead compare the signal--key association
carried by different pairs.

This expectation is confirmed by the reduced-state structure characterized
in~\cite{gianini2026full}. 

\paragraph{Signal-Only Checks.} Let \(S=S_1\cdots S_n\) denote the complete signal register and let \(y=\langle\psi|Y|\psi\rangle\). Its reduced state is
\[
\rho_S=
\begin{cases}
I^{\otimes n}/2^n, & n\ \mathrm{even},\\[1mm]
\left[I^{\otimes n}+(-1)^{(n-1)/2}yY^{\otimes n}\right]/2^n,
& n\ \mathrm{odd}.
\end{cases}
\]

For even clone multiplicity, every nontrivial signal-only Pauli observable therefore has zero expectation value and no preferred healthy outcome. For odd \(n\), the only nontrivial Pauli component of the complete signal
register is the full-weight\footnote{The
\emph{weight} of a Pauli string is the number of qubits on which it acts
non-trivially, that is, the number of its non-identity tensor factors.
Operationally it is the number of components that must participate in the joint
measurement of the corresponding observable, and it is therefore the natural
cost measure for a relational check.} operator \(Y^{\otimes n}\), whose expectation value depends on the
unknown input through \(y\). Proper signal subsets remain maximally mixed.

The two cases fail diagnostically for complementary reasons. In the even-\(n\) case, signal-only Pauli measurements are state blind but intrinsically random. In the odd-\(n\) case, the only non-random full-register Pauli observable is state dependent and therefore cannot serve as a state-blind diagnostic reference.

\paragraph{Key-Only Checks.} The key register behaves even more uniformly. For \(K=K_1\cdots K_n\),
\[
\rho_K=I^{\otimes n}/2^n,
\]
independently of the parity of \(n\)~\cite{gianini2026full}. Consequently, key-only Pauli measurements also have no deterministic healthy outcome. Moreover, a Pauli fault acting only on the key register leaves this reduced state unchanged, even though such a fault can be operationally important because the key components are shared across the normal redemption paths.

These observations show that the useful diagnostic structure of the canonical protocol is not contained in the signals or keys considered separately.

\subsection{Inter-Pair Signal--Key Relational Diagnostics}
\label{sec:interpair-diagnostics}

Because the Bell label \(\mu\) is carried by the association within each \(S_iK_i\) pair and coherently repeated across pairs, the natural storage-only diagnostic strategy is to compare those associations without determining the common label.

\subsubsection{Bell-Pair Observables}
\label{subsec:bell-pair-observables}

The repeated Bell-label structure of the canonical protocol can be viewed as
a repetition-like encoding over a four-valued alphabet. For each branch
\(\mu\), the transformed source state
\(\sigma_\mu^{(A)}|\psi\rangle_A\) is coherently associated with the same Bell state
\(|\phi_\mu\rangle\) repeated across all signal--key pairs,
\[
\sigma_\mu^{(A)}|\psi\rangle_A
\longleftrightarrow
|\phi_\mu\rangle_{S_1K_1}
\otimes\cdots\otimes
|\phi_\mu\rangle_{S_nK_n}.
\]
The analogy with the binary repetition structure of the prototypical scheme
is useful, but the repeated symbol is now a Bell label rather than a
computational-basis value, and it is encoded relationally within each
signal--key pair.

A Bell label, on the other hand, can be characterized by the eigenvalues of two
commuting Pauli observables. For each pair \(S_iK_i\), we therefore introduce
\[
B_i^X\equiv X_{S_i}X_{K_i},
\qquad
B_i^Z\equiv Z_{S_i}Z_{K_i}.
\]
The four Bell states \(|\phi_\mu\rangle_{S_iK_i}\) are simultaneous
eigenstates of \(B_i^X\) and \(B_i^Z\), so the corresponding pair of
eigenvalues provides two binary coordinates for the four-valued label
\(\mu\), up to the adopted Bell-state convention.

Measuring \(B_i^X\) or \(B_i^Z\) individually, however, would partially
resolve the Bell label carried by pair \(i\). Since the encoded state is a
coherent superposition over \(\mu\), these observables are not suitable as
standalone diagnostic checks. The useful quantities are instead relative
ones, which test whether two pairs carry the same Bell-label coordinates
without determining their common value.

The repeated-label structure nevertheless implies a state-independent relation between different pairs. If pair \(i\) and pair \(j\) carry the same Bell label, then they have identical eigenvalues for both \(B^X\) and \(B^Z\). Their products therefore have fixed eigenvalue \(+1\), independently of the value of \(\mu\).

\subsubsection{State-Blind Inter-Pair Consistency Checks}
\label{subsec:interpair-checks}

For any two pairs \(i\) and \(j\), define the inter-pair observables
\[
G_{ij}^X
\equiv 
B_i^X B_j^X
=
X_{S_i}X_{K_i}X_{S_j}X_{K_j},
\qquad
G_{ij}^Z
\equiv 
B_i^Z B_j^Z
=
Z_{S_i}Z_{K_i}Z_{S_j}Z_{K_j}.
\]
These are deterministic state-blind observables, for every clone multiplicity.

\begin{proposition}[Inter-pair consistency checks]
\label{prop:interpair-deterministic}
For every \(n\), every \(i\neq j\) and every input state \(|\psi\rangle_A\),
\[
G_{ij}^X|\Psi_{\mathrm{enc}}^{(n)}\rangle
=
|\Psi_{\mathrm{enc}}^{(n)}\rangle,
\qquad
G_{ij}^Z|\Psi_{\mathrm{enc}}^{(n)}\rangle
=
|\Psi_{\mathrm{enc}}^{(n)}\rangle .
\]
\end{proposition}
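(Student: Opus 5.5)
The plan is to work branch by branch in the expansion
\[
|\Psi_{\mathrm{enc}}^{(n)}\rangle=\tfrac12\sum_{\mu}\alpha_\mu^{-1}\,\sigma_\mu^{(A)}|\psi\rangle_A\otimes|\phi_\mu\rangle^{\otimes n},
\]
the \(n\)-clone analogue of the three-clone state of Subsection~\ref{subsec:canonical-encoding}. Since \(G_{ij}^X\) and \(G_{ij}^Z\) act trivially on \(A\), it suffices to show that each product \(|\phi_\mu\rangle_{S_iK_i}\otimes|\phi_\mu\rangle_{S_jK_j}\) is a \(+1\) eigenvector of both observables, for every \(\mu\). Once this holds, linearity gives the claim for every superposition over \(\mu\), and hence for every input \(|\psi\rangle_A\) and every phase convention \(\alpha_\mu\). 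This is exactly why the check is state blind: the argument never uses \(\psi\) or the coefficients.

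The key step is to show that each Bell state \(|\phi_\mu\rangle=(\sigma_\mu\otimes I)|\Phi^+\rangle\) is a simultaneous eigenvector of \(B^X=X\otimes X\) and \(B^Z=Z\otimes Z\), with eigenvalues \(\pm1\) depending only on \(\mu\). Start from \(XX|\Phi^+\rangle=ZZ|\Phi^+\rangle=|\Phi^+\rangle\). Then commute \(XX\) past \(\sigma_\mu\otimes I\) using \(X\sigma_\mu=\epsilon^X_\mu\sigma_\mu X\), where \(\epsilon^X_\mu=+1\) for \(\mu\in\{0,1\}\) and \(-1\) for \(\mu\in\{2,3\}\). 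The same step for \(ZZ\) gives \(\epsilon^Z_\mu=+1\) for \(\mu\in\{0,3\}\) and \(-1\) otherwise. Hence
\[
B^X|\phi_\mu\rangle=\epsilon^X_\mu|\phi_\mu\rangle,\qquad B^Z|\phi_\mu\rangle=\epsilon^Z_\mu|\phi_\mu\rangle .
\]

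Applying \(G_{ij}^X=B_i^XB_j^X\) to the branch \(\mu\) therefore multiplies it by \((\epsilon^X_\mu)^2=1\), and \(G_{ij}^Z\) likewise multiplies it by \((\epsilon^Z_\mu)^2=1\). All other pairs \(k\neq i,j\) and the register \(A\) are untouched. This yields both identities for every \(n\) and every \(i\neq j\).

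There is no real obstacle. The only point requiring care is to confirm that the eigenvalue of \(B^X\) and \(B^Z\) on \(|\phi_\mu\rangle\) is fixed by \(\mu\) alone, so that the phases \(\alpha_\mu\) and the order of the Pauli commutations cannot introduce a relative sign between the two pairs. Because the same \(\mu\) labels both pairs in each branch, any such sign appears squared and cancels.
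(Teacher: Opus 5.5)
Your proof is correct and follows the paper's own argument. The paper likewise uses the fact that each \(|\phi_\mu\rangle\) is a joint eigenstate of \(B^X\) and \(B^Z\), with eigenvalues \(x_\mu,z_\mu\) that depend on \(\mu\) but not on the pair index, derived in its appendix from the commutation or anticommutation of \(\sigma_\mu\) with \(X\) and \(Z\); every branch is then an eigenstate of \(G_{ij}^X\), \(G_{ij}^Z\) with eigenvalue \(x_\mu^2=z_\mu^2=+1\), and this survives the superposition, with your signs \(\epsilon^X_\mu,\epsilon^Z_\mu\) matching the paper's \(x_\mu,z_\mu\).
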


\begin{proof}
The Bell state \(|\phi_\mu\rangle\) is a joint eigenstate of \(B^X=XX\) and
\(B^Z=ZZ\), with eigenvalues \(x_\mu\) and \(z_\mu\) that depend on \(\mu\)
but not on the pair index (Appendix~\ref{app:bell-eigenvalues}). Since the
canonical encoding is supported only on terms carrying the same label \(\mu\)
on every pair, each such term is an eigenstate of \(G_{ij}^X=B_i^XB_j^X\) with
eigenvalue \(x_\mu^2=+1\), and of \(G_{ij}^Z\) with eigenvalue \(z_\mu^2=+1\).
The eigenvalue is thus the same on every branch, so it survives the coherent
superposition and is independent of \(|\psi\rangle_A\).
\end{proof}

These observables therefore test equality of Bell labels without revealing the label itself. The individual quantities \(B_i^X\) and \(B_i^Z\) may vary coherently with \(\mu\), but the relative quantities \(B_i^X B_j^X\) and \(B_i^Z B_j^Z\) do not. In this sense the diagnostic information resides not in the Bell labels themselves, but in their consistency across the redundant pairs.

Only a generating set of pairwise comparisons is required. For three clones we use the adjacent checks
\[
G_{12}^X,\qquad
G_{23}^X,\qquad
G_{12}^Z,\qquad
G_{23}^Z.
\]
The corresponding \(1\)--\(3\) comparisons are redundant, since
\[
G_{13}^X=G_{12}^XG_{23}^X,
\qquad
G_{13}^Z=G_{12}^ZG_{23}^Z.
\]
The healthy syndrome is therefore
\[
(g_{12}^X,g_{23}^X,g_{12}^Z,g_{23}^Z)
=
(+,+,+,+).
\]

\subsubsection{Single-Fault Syndrome Structure}
\label{subsec:interpair-syndrome}

Consider now a single Pauli fault acting on either member of one signal--key pair. An \(X\) error anticommutes with the \(Z\)-type Bell observable of that pair and commutes with its \(X\)-type observable. It therefore flips the outcomes of the \(G^Z\) checks involving the affected pair while leaving the \(G^X\) checks unchanged. Conversely, a \(Z\) error flips the corresponding \(G^X\) checks, while a \(Y\) error flips both families.

For three pairs, the resulting syndromes are shown in Table~\ref{tab:interpair-single-fault}. The table is written at the Bell pair level because the inter-pair checks produce the same syndrome whether the fault acts on the first qubit of the Bell pair\(S_i\) or on the second qubit \(K_i\).

\begin{table}[t]
\centering
\caption{Inter-pair syndrome for a single Pauli fault affecting either qubit of pair \(S_iK_i\). The four columns give the outcomes of \(G_{12}^X\), \(G_{23}^X\), \(G_{12}^Z\), and \(G_{23}^Z\), respectively.}
\label{tab:interpair-single-fault}
\begin{tabular}{c c c c c}
\hline
Fault & \(g_{12}^X\) & \(g_{23}^X\) & \(g_{12}^Z\) & \(g_{23}^Z\) \\
\hline
none        & \(+\) & \(+\) & \(+\) & \(+\) \\
\(X\) on pair 1 & \(+\) & \(+\) & \(-\) & \(+\) \\
\(X\) on pair 2 & \(+\) & \(+\) & \(-\) & \(-\) \\
\(X\) on pair 3 & \(+\) & \(+\) & \(+\) & \(-\) \\
\(Z\) on pair 1 & \(-\) & \(+\) & \(+\) & \(+\) \\
\(Z\) on pair 2 & \(-\) & \(-\) & \(+\) & \(+\) \\
\(Z\) on pair 3 & \(+\) & \(-\) & \(+\) & \(+\) \\
\(Y\) on pair 1 & \(-\) & \(+\) & \(-\) & \(+\) \\
\(Y\) on pair 2 & \(-\) & \(-\) & \(-\) & \(-\) \\
\(Y\) on pair 3 & \(+\) & \(-\) & \(+\) & \(-\) \\
\hline
\end{tabular}
\end{table}

\subsubsection{Pair Localization and Pauli-Error Classification}
\label{subsec:pair-localization}

The four inter-pair checks therefore provide two distinct pieces of information. First, the position of the sign changes localizes the anomalous signal--key pair. Second, the family in which those changes appear identifies the Pauli-error class:
\[
\begin{array}{c|c}
\text{flipped consistency checks} & \text{Pauli class} \\
\hline
G^Z\ \text{only} & X \\
G^X\ \text{only} & Z \\
G^X\ \text{and}\ G^Z & Y 
\end{array}
\]
Thus, under the single-Pauli-fault assumption, the syndrome determines both a pair index \(i\) and an error class \(X\), \(Z\), or \(Y\).

The mechanism extends beyond \(n=3\). For \(n\) signal--key pairs, one may choose a connected set of pairwise comparisons, for example
\[
G_{i,i+1}^X,
\qquad
G_{i,i+1}^Z,
\qquad
i=1,\ldots,n-1.
\]

For general \(n\), a single fault changes the consistency relations incident on the affected pair. More importantly, the inter-pair generators exhaust the diagnostic content of the storage register.

\begin{theorem}[Completeness of the storage-only relational sector]\label{thm:storage-complete}
Let \(V\) be the canonical \(n\)-clone encoding isometry and let
\[
{\mathcal S}_{\mathrm{st}}
=
\left\{
P \ \text{Pauli string on}\ S_1K_1\cdots S_nK_n
\ :\ 
V^{\dagger}PV=\pm I_L
\right\}
\]
be the group of deterministic state-blind observables supported on the storage register alone. Then \({\mathcal S}_{\mathrm{st}}\) has rank \(2(n-1)\) and is generated by the inter-pair checks \(G^{X}_{i,i+1},G^{Z}_{i,i+1}\), \(i=1,\ldots,n-1\).
\end{theorem}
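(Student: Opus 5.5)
Since every Pauli string on the storage register preserves the Bell-basis structure of each pair, the plan is to classify \(P\) by its action on Bell labels. Any Pauli string on \(S_iK_i\) can be written, up to phase, as \(P_i=(\sigma_{a_i}\otimes I)\,B_i\), where \(B_i\) is a product of \(B_i^X,B_i^Z\) (possibly trivial). The factor \(B_i\) acts diagonally on the Bell basis with a \(\mu\)-dependent sign \(\chi_i(\mu)\in\{\pm1\}\), a character of the Klein group. The factor \(\sigma_{a_i}\otimes I\) maps \(|\phi_\mu\rangle\) to \(|\phi_{a_i\cdot\mu}\rangle\) up to phase, shifting the label by \(a_i\) in \(\mathbb{Z}_2^2\).

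\textbf{Step 1 (label shifts must vanish).} If some \(a_i\neq0\), then \(P|\Psi_{\mathrm{enc}}^{(n)}\rangle\) has support on terms whose pair-\(i\) label differs from the others when \(n\ge2\). If every \(a_i\) is the same nonzero \(a\), the labels become \(\mu+a\) uniformly. In the first case \(P\) maps the code space to an orthogonal subspace, so \(V^\dagger PV=0\). In the second, the image is again in the code span but with \(\sigma_\mu^{(A)}|\psi\rangle\) now paired with label \(\mu+a\). Thus \(V^\dagger PV\) equals, up to phases, the operator \(\tfrac14\sum_\mu c_\mu\,\sigma_{\mu+a}\sigma_\mu\propto\sum_\mu c_\mu\sigma_a\). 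This is not \(\pm I\), because for \(a\neq0\) it is a multiple of a traceless Pauli (or zero). Hence every \(a_i=0\).

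\textbf{Step 2 (reduction to characters).} With all \(a_i=0\), \(P\) is a product \(\prod_i B_i\) and acts on branch \(\mu\) by \(\prod_i\chi_i(\mu)=:\chi(\mu)\). The branches \(\sigma_\mu^{(A)}|\psi\rangle\otimes|\phi_\mu\rangle^{\otimes n}\) are mutually orthogonal nonzero vectors, so \(V^\dagger PV=\tfrac14\sum_\mu\chi(\mu)\,\sigma_\mu^\dagger\sigma_\mu=\tfrac14\sum_\mu\chi(\mu)I\). This equals \(\pm I\) exactly when \(\chi\) is constant. The total character \(\chi\) is the sum in \(\mathbb{Z}_2^2\) of the individual characters, so \(P\) is deterministic and state blind iff \(\sum_i c_i=0\), where \(c_i\in\mathbb{Z}_2^2\) encodes the choice of \(B_i\in\{I,B_i^X,B_i^Z,B_i^XB_i^Z\}\). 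Here we use the Bell eigenvalues from Appendix~\ref{app:bell-eigenvalues}: the map from \(\{B^X,B^Z\}\) to characters is the standard isomorphism, and each nontrivial \(B\) is nonconstant in \(\mu\).

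\textbf{Step 3 (counting and generation).} Modulo phases, \(\mathcal S_{\mathrm{st}}\) is therefore isomorphic to the kernel of the sum map \((\mathbb{Z}_2^2)^n\to\mathbb{Z}_2^2\), which is surjective. Its rank is \(2n-2=2(n-1)\). The elements \(G^X_{i,i+1},G^Z_{i,i+1}\) correspond to the vectors \(e_i^X+e_{i+1}^X\) and \(e_i^Z+e_{i+1}^Z\). These form the standard basis of the kernel: they are independent by triangularity and equal in number to its dimension, so they generate it. Proposition~\ref{prop:interpair-deterministic} confirms their membership directly.

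I expect the main obstacle to be Step 1, specifically the uniform-shift case, where the code space is mapped to itself. There one must verify carefully, with the phases \(\alpha_\mu\) and the relation \(\sigma_{\mu+a}\propto\sigma_a\sigma_\mu\), that the induced logical operator is a nontrivial Pauli or zero rather than \(\pm I\). I would handle this by computing \(V^\dagger PV\) explicitly on the branch decomposition and taking its trace, which must vanish for \(a\neq0\).
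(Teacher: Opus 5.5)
Your proposal is correct and follows essentially the same route as the paper's proof sketch: reduce a deterministic storage-only string to a product $\prod_i (B_i^X)^{x_i}(B_i^Z)^{z_i}$, impose the parity conditions $\sum_i x_i\equiv\sum_i z_i\equiv 0 \pmod 2$, and note that the chain checks $G^X_{i,i+1},G^Z_{i,i+1}$ form a basis of the resulting $2(n-1)$-dimensional kernel. Your Step~1 supplies a justification that the sketch only asserts, namely why a deterministic string cannot shift labels. Non-uniform shifts send the code space into the orthogonal unequal-label sector, so $V^\dagger PV=0$, and a uniform shift $a\neq0$ gives a multiple of the traceless $\sigma_a$; this argument is sound. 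One wording point: in the uniform case the image lies in the equal-label sector but not necessarily in the code space itself, and your trace argument does not need the stronger claim.
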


\begin{proof}[Proof sketch]
A Pauli string acts on each pair \(S_iK_i\) as an element of the two-qubit Pauli group. Because the encoded state is supported on the sector in which all Bell labels coincide, and because the four Bell states are the joint eigenstates of \(B^{X}_i\) and \(B^{Z}_i\), a storage-only string has a deterministic expectation value on every encoded state if and only if it acts on each pair as a product \((B^{X}_i)^{x_i}(B^{Z}_i)^{z_i}\) and the resulting eigenvalue is independent of the common label \(\mu\). Since \(B^{X}_i\) and \(B^{Z}_i\) have the same eigenvalues on every pair, the label dependence cancels exactly when
\[
\sum_i x_i\equiv 0,
\qquad
\sum_i z_i\equiv 0
\pmod 2 .
\]
These two parity constraints leave \(2n-2\) free binary parameters, and the inter-pair generators realize a basis of that solution space.
\end{proof}

Two consequences are immediate and independent of the parity of \(n\). First, pair localization together with Pauli classification is optimal among deterministic state-blind measurements confined to the storage register. Second, the within-pair equivalences
\[
X_{S_i}\sim_{\mathrm{diag}} X_{K_i},\qquad
Y_{S_i}\sim_{\mathrm{diag}} Y_{K_i},\qquad
Z_{S_i}\sim_{\mathrm{diag}} Z_{K_i}
\]
are irreducible whenever \(A\) is unavailable. Any finer localization must therefore use information outside the storage-only sector.

\subsubsection{Limits of Within-Pair Localization}
\label{subsec:within-pair-limits}

The inter-pair syndrome therefore cannot distinguish the two members of the affected pair. This is operationally relevant because \(S_i\) is path-specific whereas \(K_i\) belongs to the shared key block; pair localization alone may consequently be insufficient for a redemption decision.
The limitation is not a failure of the Bell-label consistency checks themselves. These observables were deliberately constructed to test whether different pairs carry mutually consistent labels, and they perform that task without resolving the logical Bell label. Their insensitivity to which member of a pair caused the inconsistency reflects the symmetry of the Bell associations they probe.\footnote{Importantly, the within-pair ambiguity does not eliminate the usefulness of the syndrome for
active recovery. Once the \textit{affected pair} and \textit{Pauli class} have been identified,
the corresponding Bell correlation is known to have been altered by a local
Pauli of that class. For an isolated Bell state, the symmetry
\[
(P\otimes I)|\Phi^+\rangle
=
(I\otimes P^T)|\Phi^+\rangle
\]
would make the two error locations locally equivalent. One could therefore
attempt an active correction by applying the identified Pauli class to either
member of the affected pair, even without knowing which qubit was originally
faulty. For a Bell state with a fixed label, this would restore the pair up to
an irrelevant phase. In the QECL encoding, however, the Bell state is a
coherent label of the branch \(\mu\). If the correction is applied to the
wrong member of the pair, the resulting phase can depend on \(\mu\), and hence
become a relative phase between different branches of the encoded
superposition rather than a global phase. Restoring the Bell pair locally is
therefore not, in general, sufficient 
to restore the complete QECL state.}

Access to the source register \(A\) provides the additional information needed to refine this within-pair ambiguity.

\subsection{Diagnostics on Subsets Containing the Input Qubit}
\label{sec:input-diagnostics}

At first sight, using \(A\) for diagnosis may appear problematic. Subsets containing \(A\) are generally informative about the unknown input state, and measuring them without further structure could therefore reveal or disturb the protected quantum information. The relevant question, however, is not whether the subset as a whole is informative, but whether it contains particular observables whose healthy value is independent of the input state. For \(n=3\), the reduced-state structure derived in~\cite{gianini2026full} reveals precisely such observables.

\subsubsection{Informative Subsets Containing the Input}
\label{subsec:informative-input-subsets}

Consider subsets of the form
\[
H=\{A\}\cup C,
\]
where \(C\) contains one representative from each signal--key pair. Let \(q\) denote the number of signal qubits contained in \(C\), so that the remaining \(n-q\) representatives are key qubits.

The informativeness of these subsets depends on the parity of \(n\) and \(q\)~\cite{gianini2026full}. For the three-clone case, subsets with odd \(q\) are fully informative, whereas subsets with even \(q\) are partially informative: their dependence on the unknown state is restricted to the \(y\)-component of its Bloch vector. In particular, the four even-\(q\) subsets
\[
\{A,K_1,K_2,K_3\},
\qquad
\{A,S_1,S_2,K_3\},
\]
\[
\{A,S_1,K_2,S_3\},
\qquad
\{A,K_1,S_2,S_3\}
\]
all belong to this partially informative family.

Their reduced states contain both input-dependent and input-independent Pauli components. This coexistence is diagnostically important: \textit{informativeness is a property of the complete reduced state}, whereas \textit{a diagnostic measurement probes a particular observable within that state}.

\subsubsection{Logical and State-Independent Observable Sectors}
\label{subsec:logical-state-independent-sectors}

For example, the reduced state on \(A K_1K_2K_3\) is~\cite{gianini2026full}
\[
\rho_{A K_1K_2K_3}
=
\frac{1}{16}
\left(
I
+
y\,Z_A X_{K_1}X_{K_2}X_{K_3}
+
Y_A Y_{K_1}Y_{K_2}Y_{K_3}
-
y\,X_A Z_{K_1}Z_{K_2}Z_{K_3}
\right).
\]
The first and third nonidentity terms depend on the unknown Bloch component \(y\). Measuring them would therefore probe the protected state. The middle term is qualitatively different:
\[
\left\langle
Y_A Y_{K_1}Y_{K_2}Y_{K_3}
\right\rangle=+1
\]
\textit{for every input state}.

Thus, although the subset \(A K_1K_2K_3\) is partially informative, it contains a nontrivial observable with a deterministic, state-independent value. The same structure occurs for the other even-\(q\) subsets. This distinction will later be formalized in Section~\ref{sec:relational-diagnosis}: an informative subsystem need not consist exclusively of logical observables; it may also contain an observable sector that acts identically on all encoded logical states.

For the present diagnostic purpose, these input-independent components are precisely the useful ones. They provide healthy reference values without requiring any knowledge of \(|\psi\rangle_A\).

\subsubsection{Source-Containing Transverse Parity Checks}
\label{subsec:source-transverse-checks}

For \(n=3\), the four even-\(q\) subsets identified above yield the following transverse \(Y\)-parity observables:
\[
H_0
=
Y_A Y_{K_1}Y_{K_2}Y_{K_3},
\]
\[
H_{12}
=
Y_A Y_{S_1}Y_{S_2}Y_{K_3},
\qquad
H_{13}
=
Y_A Y_{S_1}Y_{K_2}Y_{S_3},
\]
\[
H_{23}
=
Y_A Y_{K_1}Y_{S_2}Y_{S_3}.
\]
\begin{proposition}[Source-containing transverse checks]
\label{prop:transverse-checks}
For \(\alpha\in\{0,12,13,23\}\) and every input state \(|\psi\rangle_A\),
\[
H_\alpha|\Psi_{\mathrm{enc}}^{(3)}\rangle
=
|\Psi_{\mathrm{enc}}^{(3)}\rangle .
\]
Moreover the four observables are not independent: they satisfy
\(H_0H_{12}H_{13}H_{23}=I\), and \(H_{12}\), \(H_{13}\), \(H_{23}\) are
products of \(H_0\) with inter-pair generators.
\end{proposition}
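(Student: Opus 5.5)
The plan is to establish the single relation \(H_0|\Psi_{\mathrm{enc}}^{(3)}\rangle=|\Psi_{\mathrm{enc}}^{(3)}\rangle\) by a branch-by-branch computation. I would then reduce \(H_{12},H_{13},H_{23}\) to \(H_0\) times inter-pair generators, so that their stabilizing property follows from Proposition~\ref{prop:interpair-deterministic}. The dependence relation then follows algebraically.

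\textbf{Step 1 (the reduction).} Within one pair, \(X_SX_K\,Z_SZ_K=(XZ)\otimes(XZ)=(-iY)\otimes(-iY)=-Y_SY_K\), so \(Y_{S_i}Y_{K_i}=-B_i^XB_i^Z\). Hence
\[
Y_{S_1}Y_{S_2}=Y_{K_1}Y_{K_2}\,(Y_{S_1}Y_{K_1})(Y_{S_2}Y_{K_2})=Y_{K_1}Y_{K_2}\,B_1^XB_1^ZB_2^XB_2^Z ,
\]
where the two minus signs cancel. Regrouping the commuting operators on distinct pairs gives \(H_{12}=H_0\,G_{12}^XG_{12}^Z\), and likewise \(H_{13}=H_0\,G_{13}^XG_{13}^Z\) and \(H_{23}=H_0\,G_{23}^XG_{23}^Z\).

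One check is needed on commutation. \(H_0\) anticommutes with \(B_i^X\) and \(B_i^Z\) on each pair, because \(Y_{K_i}\) anticommutes with \(X_{K_i}\) and \(Z_{K_i}\). Each \(G_{ij}\) touches two pairs, so it commutes with \(H_0\).

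The product relation then follows. Using \(G_{13}=G_{12}G_{23}\) in both families,
\[
H_0H_{12}H_{13}H_{23}=H_0^4\,(G_{12}^XG_{13}^XG_{23}^X)(G_{12}^ZG_{13}^ZG_{23}^Z)=I,
\]
since every operator involved squares to \(I\) and they all commute.

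\textbf{Step 2 (the main obstacle: \(H_0\) itself).} This is the only step where the phase convention \(\alpha_\mu\) genuinely matters. The operator \(Y_K\) alone is not diagonal in the Bell basis, so it permutes the branches instead of merely rescaling them.

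Using the transpose trick, \((I\otimes Y)|\phi_\mu\rangle=(\sigma_\mu Y^T\otimes I)|\Phi^+\rangle=-(\sigma_\mu Y\otimes I)|\Phi^+\rangle\). Write \(\sigma_\mu Y=c_\mu\sigma_{\pi(\mu)}\) and \(Y\sigma_\mu=d_\mu\sigma_{\pi(\mu)}\), where \(\pi\) swaps \(0\leftrightarrow2\) and \(1\leftrightarrow3\). The coefficients are \(c=(1,i,1,-i)\) and \(d=(1,-i,1,i)\).

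With these, \(H_0\) maps the \(\mu\)-term of the encoded state to the \(\pi(\mu)\)-term with coefficient \(\alpha_\mu^{-1}d_\mu(-c_\mu)^3\). It remains to verify that this equals \(\alpha_{\pi(\mu)}^{-1}\) for each \(\mu\), using \(\alpha=(1,i,-1,i)\).

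For \(\mu=0\), the coefficient is \(1\cdot1\cdot(-1)=-1=\alpha_2^{-1}\). For \(\mu=1\), it is \((-i)(-i)(-1)(i)^3=-i=\alpha_3^{-1}\). The cases \(\mu=2,3\) follow because \(H_0^2=I\) and \(\pi\) is an involution.

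Therefore \(H_0\) permutes the four terms with exactly matching coefficients, and so it fixes the state for every \(|\psi\rangle_A\). Combined with Step~1 and Proposition~\ref{prop:interpair-deterministic}, this completes the argument. If any sign fails under the convention as transcribed, I would recheck the Bell-basis phase of \(|\phi_2\rangle\), since \(Y=iXZ\) is where convention slips are most likely.
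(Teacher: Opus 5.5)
Your proof is correct, but it reaches the stabilizing property by a different route from the paper; the algebraic relations you derive the same way. Your relation half ($Y_{S_i}Y_{K_i}=-B_i^XB_i^Z$, hence $H_{12}=H_0G^X_{12}G^Z_{12}$ and its analogues, and the product identity via $G_{13}=G_{12}G_{23}$) is essentially the computation of Appendix~\ref{app:transverse-identities}, grouped by pairs instead of qubit by qubit.

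The difference is in the stabilizing property. The paper does not compute the action of $H_\alpha$ directly. It reads $\langle H_\alpha\rangle=+1$ off the reduced states of the four even-$q$ subsets, taken from the companion reduced-state analysis (e.g.\ the $+Y_AY_{K_1}Y_{K_2}Y_{K_3}$ term of $\rho_{AK_1K_2K_3}$). It treats the four checks in parallel and independently of the relations.

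You instead verify $H_0$ alone from the encoding. You track how $Y_{K_i}$ permutes the Bell labels ($0\leftrightarrow 2$, $1\leftrightarrow 3$) and check that the phases $\alpha_\mu$ match exactly. Your $c_\mu$, $d_\mu$, the two explicit cases, and the $H_0^2=I$ argument for $\mu=2,3$ are all right. You then obtain the other three checks from the relations together with Proposition~\ref{prop:interpair-deterministic}.

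Your route has three advantages:
\begin{itemize}
\item It is self-contained and makes the relations do actual work.
\item It fixes the sign $+1$ from first principles. This is not a formality here, since the appendix records that the cited reduced state once carried a sign misprint.
\item It extends verbatim to odd $n$. Replacing $(-c_\mu)^3$ by $(-c_\mu)^n$ gives the eigenvalue $i^{n+1}=(-1)^{(n+1)/2}$ for $Y_AY_{K_1}\cdots Y_{K_n}$, which is exactly the sign $\eta_n$ asserted in Section~\ref{sec:odd-even-diagnostics}.
\end{itemize}

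The paper's route, in exchange, avoids branch bookkeeping. It also exhibits the checks as state-independent components inside partially informative reduced states, which is the conceptual point that section is making.
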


The deterministic value follows from the reduced states of the four even-\(q\)
subsets; the relations among the four observables are verified in
Appendix~\ref{app:transverse-identities}.

These checks have a useful combinatorial structure. Each contains \(A\) and exactly one representative from every signal--key pair. Across the four checks, the choice between \(S_i\) and \(K_i\) changes in such a way that every storage qubit is associated with a distinct incidence pattern. A fault that anticommutes with \(Y\) therefore flips a characteristic subset of the four outcomes.

Unlike the inter-pair observables of Section~\ref{subsec:interpair-checks}, these checks do not compare Bell labels directly. Rather, they exploit a state-independent sector of the larger source-containing reduced states. They therefore \textit{provide a diagnostically complementary view of the same encoded resource}.

\subsubsection{Single-Fault Localization}
\label{subsec:source-fault-localization}

A single \(X\) or \(Z\) fault on any qubit appearing in one of the \(H\)-checks anticommutes with the local \(Y\) factor and flips the corresponding outcome. The resulting four-bit patterns are shown in Table~\ref{tab:source-containing-syndrome}. Since \(X\) and \(Z\) have the same commutation relation with \(Y\), these checks \textit{do not distinguish the two error classes} but they do \textit{localize the faulty component}.

\begin{table}[t]
\centering
\caption{Syndrome of the source-containing checks for a single \(X\) or \(Z\) fault. The columns correspond to \(H_0\), \(H_{12}\), \(H_{13}\), and \(H_{23}\).}
\label{tab:source-containing-syndrome}
\begin{tabular}{c c c c c}
\hline
Fault location & \(h_0\) & \(h_{12}\) & \(h_{13}\) & \(h_{23}\) \\
\hline
none & \(+\) & \(+\) & \(+\) & \(+\) \\
\(A\)   & \(-\) & \(-\) & \(-\) & \(-\) \\
\(S_1\) & \(+\) & \(-\) & \(-\) & \(+\) \\
\(K_1\) & \(-\) & \(+\) & \(+\) & \(-\) \\
\(S_2\) & \(+\) & \(-\) & \(+\) & \(-\) \\
\(K_2\) & \(-\) & \(+\) & \(-\) & \(+\) \\
\(S_3\) & \(+\) & \(+\) & \(-\) & \(-\) \\
\(K_3\) & \(-\) & \(-\) & \(+\) & \(+\) \\
\hline
\end{tabular}
\end{table}

The seven nontrivial locations therefore produce distinct syndromes for \(X\)- or \(Z\)-type faults, resolving the positional ambiguity left by the inter-pair checks. A \(Y\) fault, however, commutes with the local \(Y\) factors and is invisible to this family, while \(X\) and \(Z\) remain indistinguishable.

Although their supporting subsets may be informative about the protected state, the observables \(H_\alpha\) themselves have deterministic state-independent eigenvalues on the healthy encoding. This refinement is available only while \(A\) remains accessible.

\subsection{Combined One-Shot Diagnostic Scheme}
\label{sec:combined-diagnostics}

The inter-pair and source-containing families are complementary: the former identify the affected pair and Pauli class, while the latter distinguish \(S_i\) from \(K_i\) for \(X\)- and \(Z\)-type faults. Together they span only five independent deterministic checks, whereas the three-clone encoding admits six. The missing generator completes single-Pauli resolution but has greater measurement weight, motivating comparison with the economical five-generator variant.

\subsubsection{A Complete Generating Set}
\label{subsec:minimal-generating-set}

The four inter-pair checks
\[
G_{12}^X,\qquad G_{23}^X,\qquad G_{12}^Z,\qquad G_{23}^Z
\]
are independent, and by Theorem~\ref{thm:storage-complete} they already generate
the whole storage-only relational sector. Among the four source-containing
checks, only one adds a further generator. Indeed, choosing
\[
H_0=Y_A Y_{K_1}Y_{K_2}Y_{K_3},
\]
the remaining three can be reconstructed as
\[
H_{12}=G_{12}^XG_{12}^ZH_0,
\qquad
H_{23}=G_{23}^XG_{23}^ZH_0,
\qquad
H_{13}
=
G_{12}^XG_{23}^XG_{12}^ZG_{23}^ZH_0,
\]
so that the four source-containing observables satisfy
\(
H_0H_{12}H_{13}H_{23}=I
\).

These five generators do not, however, exhaust the deterministic state-blind
observables of the encoding. The three-clone encoding isometry is Clifford and
maps one logical qubit into the seven physical qubits \(A S_1K_1S_2K_2S_3K_3\),
so its image is a \([[7,1]]\) stabilizer code whose stabilizer group has rank
six. Combined with Theorem~\ref{thm:storage-complete}, which fixes the
storage-only rank at \(2(n-1)=4\), this gives the following count.

\begin{proposition}[Source-assisted sector]
\label{prop:source-assisted-rank}
For the canonical \(n\)-clone encoding, the group of deterministic state-blind
Pauli observables has rank \(2n\). Exactly two of its generators involve the
source register \(A\), for every clone multiplicity \(n\).
\end{proposition}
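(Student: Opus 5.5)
The plan is to identify the group of deterministic state-blind Pauli observables with the stabilizer group of the image code, and then split that group according to its restriction to \(A\). The encoding isometry \(V\) maps one logical qubit into the \(2n+1\) qubits \(A S_1K_1\cdots S_nK_n\). For a Pauli string \(P\), the condition \(V^\dagger P V=\pm I_L\) says exactly that \(\pm P\) fixes the code space pointwise. I will first check that \(V\) is Clifford for every \(n\), so that the image is an \([[2n+1,1]]\) stabilizer code. This can be done by writing \(V\) as the circuit that prepares \(n\) copies of \(|\Phi^+\rangle\), uses an ancilla-free controlled-Pauli fan-out from two ``label'' qubits, and absorbs the phases \(\alpha_\mu\) into Clifford phase gates. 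For such a code the stabilizer has rank \((2n+1)-1=2n\), which gives the first claim.

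More cheaply and concretely, I will exhibit \(2n\) independent commuting generators and then invoke the bound that a code space of dimension \(2\) on \(2n+1\) qubits admits at most \(2n\) independent stabilizers. For the generators I take the \(2(n-1)\) inter-pair checks \(G^X_{i,i+1},G^Z_{i,i+1}\), which are deterministic by Proposition~\ref{prop:interpair-deterministic}. To these I add two source-containing strings: \(T^X=\sigma^{(A)}\prod_i X_{S_i}X_{K_i}\) and \(T^Z=\tau^{(A)}\prod_i Z_{S_i}Z_{K_i}\), with \(\sigma,\tau\in\{X,Y,Z\}\) and phases chosen so that the branch-dependent eigenvalue \(x_\mu^{\,n}\) (resp.\ \(z_\mu^{\,n}\)) is compensated by the commutation sign \(\sigma_\mu\sigma=\pm\sigma\sigma_\mu\) acting on \(\sigma_\mu^{(A)}|\psi\rangle\), together with the ratios of the \(\alpha_\mu\). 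For \(n=3\) the combination \(T^XT^Z\) should reproduce \(H_0\) up to inter-pair factors, which serves as a consistency check. Independence is immediate, because \(T^X,T^Z\) are the only generators acting on \(A\), and they act there by distinct non-identity Paulis.

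The statement ``exactly two generators involve \(A\)'' I will make precise as a quotient statement. Consider the restriction homomorphism from the stabilizer group \(\mathcal S\) to the Pauli group on \(A\) modulo phases. Its kernel is exactly \({\mathcal S}_{\mathrm{st}}\), which has rank \(2(n-1)\) by Theorem~\ref{thm:storage-complete}. Hence the image has rank \(2n-2(n-1)=2\), independently of \(n\). Equivalently, every generating set contains two elements outside the storage-only sector, and any two such elements completing a basis of \({\mathcal S}_{\mathrm{st}}\) suffice.

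The main obstacle is the phase bookkeeping that makes \(T^X\) and \(T^Z\) genuinely deterministic for all \(n\). The factor \(\alpha_\mu^{-1}\) and the parity-dependent \(\alpha_2=-i^{n+1}\) must combine with \(x_\mu^n\), \(z_\mu^n\) and the Pauli commutation signs into a \(\mu\)-independent overall sign. I would do this by tabulating, for each \(\mu\in\{0,1,2,3\}\), the action on the branch term and choosing the source Pauli separately for odd and even \(n\); the rank conclusion does not depend on which choice works. If this direct bookkeeping proves awkward, I would fall back on the Clifford argument alone for the rank \(2n\), combined with the quotient argument above.
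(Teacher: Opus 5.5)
\paragraph{The Clifford-plus-quotient argument.} This part of your proposal is the paper's own proof. The paper observes that the encoding is Clifford, so its image is a $[[2n+1,1]]$ stabilizer code of stabilizer rank $2n$. It then subtracts the storage-only rank $2(n-1)$ fixed by Theorem~\ref{thm:storage-complete}. Your restriction map to the Pauli group on $A$ modulo phases, with kernel $\mathcal S_{\mathrm{st}}$, is a clean way to make ``exactly two generators involve $A$'' precise.

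The paper only asserts the Clifford property, so if you verify it, your sketch leaves two points open. First, the two label qubits must be disentangled after the fan-out, for instance by coherently recomputing the Bell label of pair~1 into them with parity-extraction CNOTs. Second, the phases are indeed Clifford. With a fan-out applying $X^xZ^z$, the $\mu=2$ branch acquires $(-i)^{n+1}$ from $XZ=-iY$ on $A$ and on each $S_i$. The convention $\alpha_2=-i^{\,n+1}$ compensates this exactly, leaving $S^\dagger\otimes S^\dagger$ on the label register.

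\paragraph{The ``cheaper, concrete'' route fails.} This failure is structural, not a matter of awkward bookkeeping. The storage part $\prod_iB_i^X$ of $T^X=\sigma^{(A)}\prod_iX_{S_i}X_{K_i}$ is diagonal in the Bell-label basis, so it never shifts $\mu$:
\[
T^XV|\psi\rangle=\tfrac12\sum_{\mu}\alpha_\mu^{-1}x_\mu^{\,n}\,\sigma\sigma_\mu|\psi\rangle\otimes|\phi_\mu\rangle^{\otimes n}.
\]
Since the label states are orthonormal, matching $\pm V|\psi\rangle$ for every $\psi$ forces $\sigma\sigma_\mu=\pm x_\mu^{\,n}\sigma_\mu$, i.e.\ $\sigma\propto I$. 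The $\alpha_\mu$ ratios never enter, because no label is moved.

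Using the commutation sign of $\sigma$ with $\sigma_\mu$ to absorb $x_\mu^{\,n}$ only yields $T^XV=\pm V\sigma$, which is a logical operator. For $n=3$, for example, $X_A\prod_iB_i^X$ is exactly the logical $X$, the opposite of a state-blind check. With $\sigma=I$, the string $\prod_iB_i^X$ is deterministic only for even $n$, and there it is already a product of inter-pair checks. The same holds for $T^Z$.

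Your own consistency test would expose this. $T^XT^Z$ acts as $Y\otimes Y$ on every pair, while $H_0$ acts as $I\otimes Y$. They therefore differ by a string whose storage part $\prod_iY_{S_i}$ is not Bell-diagonal on any pair, so it is not in $\mathcal S_{\mathrm{st}}$.

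\paragraph{How to repair the concrete route.} A source-assisted stabilizer carrying $\sigma_c$ on $A$ must shift the label by the same $c$ on every pair. On pair $i$ it must act as $a_i\otimes b_i$ with $a_ib_i\propto\sigma_c$. This is what $Y_{K_i}$ does in $H_0$, and what $Y_{S_1}Z_{K_1}$ and $X_{K_j}$ do in $J_1$. Only then do ratios $\alpha_{\mu'}/\alpha_\mu$ appear, with $\sigma_{\mu'}\propto\sigma_\mu\sigma_c$. Determinism then reduces to a sign condition in $\mu$ that can be satisfied by the choice of $\prod_ia_i$.

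Carrying this out for $\sigma_c=X$ and $\sigma_c=Z$ (closure gives $Y$) yields a Clifford-free lower bound of $2n$. Your dimension count, or equivalently the quotient bound, supplies the matching upper bound. The same computation also explains why only the $Y_A$ coset with $n$ odd admits weight $n+1$.
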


For \(n=3\) the observable \(H_0\) is one of those two generators; the other is
not contained in the span of the five checks above. A direct search identifies
\(32\) deterministic observables outside that span, of which \(24\) have the
minimum weight five. Each has the same structure: it acts on \(A\), on
\emph{both} members of one distinguished signal--key pair through a mixed
two-qubit Pauli of the form \(Y_SZ_K\) or \(Z_SY_K\), and on one representative
of each remaining pair. A convenient choice is
\[
J_1
=
-\,X_A\,Y_{S_1}Z_{K_1}\,X_{K_2}\,X_{K_3},
\]
whose healthy eigenvalue is \(+1\). No symmetric choice exists at this weight:
which pair plays the distinguished role is arbitrary, and the diagnostic
resolution obtained below does not depend on that choice. The price of this
additional generator is a measurement that couples the source register to both
qubits of one pair, which is heavier than any check considered so far.

The complete diagnostic generating set is therefore
\[
\mathcal G_{\mathrm{diag}}
=
\left\{
G_{12}^X,
G_{23}^X,
G_{12}^Z,
G_{23}^Z,
H_0,
J_1
\right\}.
\]
All six observables commute, and the healthy three-clone encoding lies in their
common \(+1\) eigenspace. We denote the corresponding syndrome by
\[
\mathbf s
=
\left(
g_{12}^X,
g_{23}^X,
g_{12}^Z,
g_{23}^Z,
h_0,
j_1
\right).
\]

Table~\ref{tab:combined-single-pauli-syndrome} gives the syndrome generated by
every single-qubit Pauli fault on the source, signal, or key registers.

\begin{table*}[t]
\centering
\caption{Combined syndrome generated by the complete set
\(\{G_{12}^X,G_{23}^X,G_{12}^Z,G_{23}^Z,H_0,J_1\}\)
under a single-Pauli-fault assumption. All twenty-two rows are distinct: every
single-qubit Pauli fault is uniquely identified. Deleting the last column
returns the economical five-generator variant of
Section~\ref{subsec:diagnostic-equivalence-classes}, whose repeated rows define
its residual diagnostic equivalence classes.}
\label{tab:combined-single-pauli-syndrome}
\begin{tabular}{c c c c c c c}
\hline
Fault &
\(g_{12}^X\) &
\(g_{23}^X\) &
\(g_{12}^Z\) &
\(g_{23}^Z\) &
\(h_0\) &
\(j_1\)\\
\hline
none              & \(+\) & \(+\) & \(+\) & \(+\) & \(+\) & \(+\)\\
\(X_A\)           & \(+\) & \(+\) & \(+\) & \(+\) & \(-\) & \(+\)\\
\(Y_A\)           & \(+\) & \(+\) & \(+\) & \(+\) & \(+\) & \(-\)\\
\(Z_A\)           & \(+\) & \(+\) & \(+\) & \(+\) & \(-\) & \(-\)\\
\hline
\(X_{S_1}\) & \(+\) & \(+\) & \(-\) & \(+\) & \(+\) & \(-\)\\
\(X_{K_1}\) & \(+\) & \(+\) & \(-\) & \(+\) & \(-\) & \(-\)\\
\(Z_{S_1}\) & \(-\) & \(+\) & \(+\) & \(+\) & \(+\) & \(-\)\\
\(Z_{K_1}\) & \(-\) & \(+\) & \(+\) & \(+\) & \(-\) & \(+\)\\
\(Y_{S_1}\) & \(-\) & \(+\) & \(-\) & \(+\) & \(+\) & \(+\)\\
\(Y_{K_1}\) & \(-\) & \(+\) & \(-\) & \(+\) & \(+\) & \(-\)\\
\hline
\(X_{S_2}\) & \(+\) & \(+\) & \(-\) & \(-\) & \(+\) & \(+\)\\
\(X_{K_2}\) & \(+\) & \(+\) & \(-\) & \(-\) & \(-\) & \(+\)\\
\(Z_{S_2}\) & \(-\) & \(-\) & \(+\) & \(+\) & \(+\) & \(+\)\\
\(Z_{K_2}\) & \(-\) & \(-\) & \(+\) & \(+\) & \(-\) & \(-\)\\
\(Y_{S_2}\) & \(-\) & \(-\) & \(-\) & \(-\) & \(+\) & \(+\)\\
\(Y_{K_2}\) & \(-\) & \(-\) & \(-\) & \(-\) & \(+\) & \(-\)\\
\hline
\(X_{S_3}\) & \(+\) & \(+\) & \(+\) & \(-\) & \(+\) & \(+\)\\
\(X_{K_3}\) & \(+\) & \(+\) & \(+\) & \(-\) & \(-\) & \(+\)\\
\(Z_{S_3}\) & \(+\) & \(-\) & \(+\) & \(+\) & \(+\) & \(+\)\\
\(Z_{K_3}\) & \(+\) & \(-\) & \(+\) & \(+\) & \(-\) & \(-\)\\
\(Y_{S_3}\) & \(+\) & \(-\) & \(+\) & \(-\) & \(+\) & \(+\)\\
\(Y_{K_3}\) & \(+\) & \(-\) & \(+\) & \(-\) & \(+\) & \(-\)\\
\hline
\end{tabular}
\end{table*}
%

\subsubsection{Diagnostic Resolution and the Economical Variant}
\label{subsec:diagnostic-equivalence-classes}

Two faults \(E\) and \(E'\) are diagnostically equivalent with respect to a
generating set when they produce the same syndrome,
\[
E\sim_{\mathrm{diag}}E'
\quad\Longleftrightarrow\quad
\mathbf s(E)=\mathbf s(E').
\]
The resolution of a diagnostic scheme is the partition of the candidate fault
set induced by this relation. For the complete set, that partition is trivial.

\begin{proposition}[Complete single-fault resolution]
\label{prop:complete-resolution}
Let \(\mathcal F\) be the set consisting of the identity and of all single-qubit
Pauli faults on \(A S_1K_1S_2K_2S_3K_3\). The syndrome map induced by
\(\mathcal G_{\mathrm{diag}}\) is injective on \(\mathcal F\): every element of
\(\mathcal F\) is uniquely identified, and there are no nontrivial diagnostic
equivalence classes.
\end{proposition}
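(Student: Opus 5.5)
The plan is to reduce injectivity to a linear-algebra statement over \(\mathbb F_2\) and then check it column by column. Every element of \(\mathcal F\) is a Pauli operator, and every generator in \(\mathcal G_{\mathrm{diag}}\) is a Pauli string with healthy eigenvalue \(+1\). The outcome of a generator \(G\) after a fault \(E\) is therefore \(+1\) if \(E\) commutes with \(G\) and \(-1\) if it anticommutes. So the syndrome of \(E\) is the vector of symplectic inner products of \(E\) with the six generators. Injectivity on \(\mathcal F\) is then the claim that these \(22\) binary vectors, including the zero vector for the identity, are pairwise distinct. This is exactly the content of Table~\ref{tab:combined-single-pauli-syndrome}, and the proof consists of verifying that table and reading off distinctness.

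To verify the table economically, I would use the structure already established rather than recompute all \(132\) entries.

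\textbf{Storage faults, first four columns.} For faults on the storage register, the first four columns are Table~\ref{tab:interpair-single-fault}: the inter-pair checks localize the pair and identify the Pauli class.

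\textbf{Storage faults, column \(h_0\).} The \(h_0\) column follows from Table~\ref{tab:source-containing-syndrome}. \(H_0\) anticommutes with \(X\) and \(Z\) on \(K_j\) and on \(A\), and commutes with everything on the \(S_j\) and with every \(Y\).

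\textbf{Column \(j_1\).} The \(j_1\) column is a direct local commutation check against \(J_1=-X_AY_{S_1}Z_{K_1}X_{K_2}X_{K_3}\). A single-qubit Pauli anticommutes with \(J_1\) exactly when it is a non-identity Pauli different from the local factor of \(J_1\) on that qubit. On qubits where \(J_1\) acts as the identity (\(S_2\), \(S_3\)), every entry is \(+\).

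\textbf{Distinctness.} Group the faults by their first four bits.
\begin{itemize}
\item The four source faults \(I,X_A,Y_A,Z_A\) share the first four bits \((+,+,+,+)\). They are separated by \((h_0,j_1)\), which takes the values \((+,+)\), \((-,+)\), \((+,-)\), \((-,-)\). These are all distinct.
\item Each storage fault has a nonzero first block that determines its pair and Pauli class. Within a given (pair, class), only the two faults \(P_{S_i}\) and \(P_{K_i}\) collide.
\item For \(X\) and \(Z\), the two members are separated by \(h_0\), since \(H_0\) touches \(K_i\) but not \(S_i\).
\item For \(Y\), \(h_0\) is blind, so the separation must come from \(j_1\). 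For \(i=1\), \(J_1\) has local factors \(Y\) on \(S_1\) and \(Z\) on \(K_1\), so it commutes with \(Y_{S_1}\) and anticommutes with \(Y_{K_1}\). For \(i=2,3\), \(J_1\) has \(I\) on \(S_i\) and \(X\) on \(K_i\), so again \(Y_{S_i}\) commutes and \(Y_{K_i}\) anticommutes.
\end{itemize}
Since distinct (pair, class) labels already differ in the first four bits, this exhausts all possible collisions.

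\textbf{Main obstacle.} The only delicate point is certifying that \(J_1\) actually belongs to the deterministic group, with healthy eigenvalue \(+1\) and commuting with the other five generators. The syndrome interpretation requires the healthy code space to be a joint \(+1\) eigenspace, so that a fault flips an outcome precisely when it anticommutes.

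Commutation with the other generators is a direct weight-overlap parity count. For example, \(J_1\) and \(H_0\) overlap on \(A,K_1,K_2,K_3\) and anticommute on all four, which is an even number, so they commute. The eigenvalue claim is harder. I would check \(J_1\) on each branch of \(|\Psi^{(3)}_{\mathrm{enc}}\rangle\) using the Bell identities of Appendix~\ref{app:bell-eigenvalues}. The term \(Y_{S_1}Z_{K_1}\) maps \(|\phi_\mu\rangle\) to a multiple of the Bell state with label \(\mu\cdot(Y Z^T\text{-type shift})\). The \(X_{K_j}\) factors shift the other pairs' labels by \(X\), and \(X_A\) shifts the source branch by \(X\). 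I would then confirm that the branch permutation is consistent across all registers and that the accumulated phases, including the \(\alpha_\mu\) and the overall minus sign, combine to \(+1\) on every branch.

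If this phase bookkeeping proves error-prone, the fallback is to cite Proposition~\ref{prop:source-assisted-rank} and the stated direct search that identifies \(J_1\) as a deterministic observable. The proof then rests only on the commutation table above.
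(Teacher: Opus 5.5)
Your proposal is correct and follows the paper's own argument, which consists of observing that the twenty-two rows of Table~\ref{tab:combined-single-pauli-syndrome} are pairwise distinct. Your grouping by the inter-pair block is an organized verification of that same distinctness: \(h_0\) separates \(X_{S_i}/X_{K_i}\) and \(Z_{S_i}/Z_{K_i}\), while \(j_1\) separates \(Y_{S_i}/Y_{K_i}\) and the source faults. The \(J_1\) check you flag as delicate also goes through: every factor of \(J_1\) sends label \(\mu\) to its \(X\)-partner (\(0\leftrightarrow1\), \(2\leftrightarrow3\)), and on all four branches the accumulated phases combine with the \(\alpha_\mu\) and the leading minus sign to give eigenvalue \(+1\).
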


This is read directly from Table~\ref{tab:combined-single-pauli-syndrome}, whose
twenty-two rows are pairwise distinct. In particular the classes
\(\{Y_{S_i},Y_{K_i}\}\), \(\{X_A,Z_A\}\) and \(\{I,Y_A\}\) that arise from the
inter-pair and transverse families alone are not intrinsic limitations of
relational diagnosis in this protocol: they are consequences of measuring only
five of the six available generators.

The six generators are moreover not redundant: no smaller set of deterministic
state-blind checks achieves this resolution. To state this precisely it is
convenient to record the syndrome of a fault as a linear functional on the
diagnostic group. Let \(\mathcal S\) denote the group of deterministic
state-blind Pauli observables of the three-clone encoding, regarded as an
\(\mathbb F_2\)-vector space of dimension six, and for \(E\in\mathcal F\) define
\(\sigma(E)\in\mathcal S^{*}=\mathrm{Hom}(\mathcal S,\mathbb F_2)\) by
\[
\sigma(E)(G)=
\begin{cases}
0, & EG=GE,\\
1, & EG=-GE.
\end{cases}
\]
The map \(\sigma\) is linear in the symplectic representation of the Pauli
group, and a diagnostic set \(\mathcal G\subseteq\mathcal S\) separates \(E\)
from \(E'\) precisely when \(\sigma(E)\) and \(\sigma(E')\) differ somewhere on
\(\mathcal G\).

\begin{proposition}[Minimality of the six-generator set]
\label{prop:minimality}
No subgroup of \(\mathcal S\) of rank five has an injective syndrome map on
\(\mathcal F\). Six deterministic state-blind checks are therefore necessary,
as well as sufficient, for complete single-fault resolution in the three-clone
canonical protocol. The best attainable with five checks is \(19\) distinct
syndromes for the \(22\) elements of \(\mathcal F\).
\end{proposition}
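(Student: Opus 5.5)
The plan is to dualize. A rank-five subgroup \(\mathcal G\subset\mathcal S\) is the kernel of a nonzero functional on \(\mathcal S\). Since \(\mathcal S\) is six-dimensional, this means \(\mathcal G=\ker\lambda\) for some nonzero \(\lambda\in\mathcal S^{*}\), and there are \(2^6-1=63\) such subgroups. Two faults \(E,E'\) are then confused by \(\mathcal G\) exactly when \(\sigma(E)+\sigma(E')\) vanishes on \(\ker\lambda\), that is, when
\[
\sigma(E)+\sigma(E')\in\{0,\lambda\}.
\]
Proposition~\ref{prop:complete-resolution} says that \(\sigma\) is injective on \(\mathcal F\), so the case \(\sigma(E)+\sigma(E')=0\) never arises with \(E\neq E'\). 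Hence \(\ker\lambda\) fails to be injective precisely when \(\lambda\) lies in the difference set
\[
D=\{\sigma(E)+\sigma(E'):E\neq E'\in\mathcal F\}\subseteq\mathcal S^{*}\setminus\{0\}.
\]
The first claim of the proposition is therefore equivalent to \(D=\mathcal S^{*}\setminus\{0\}\): every one of the \(63\) nonzero vectors of \(\mathbb F_2^6\) must be a difference of two syndromes.

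To verify this I would work in the coordinates fixed by \(\mathcal G_{\mathrm{diag}}\), where each \(\sigma(E)\) is a row of Table~\ref{tab:combined-single-pauli-syndrome} read as a vector in \(\mathbb F_2^6\) (\(-\mapsto1\)). I would organise the covering by first using the rows differing from the identity. The \(21\) nonzero syndromes \(\sigma(E)\) are themselves elements of \(D\), via the pair \((E,I)\). The remaining \(42\) vectors must come from sums of two nonzero rows.

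The pair structure helps with this second step. The sums \(\sigma(P_{S_i})+\sigma(P_{K_i})\) contribute vectors supported on the last two coordinates. Sums across different pairs, for example \(X_{S_1}+X_{S_2}\) or \(Z_{S_1}+Y_{S_3}\), contribute vectors whose first four coordinates range over all sixteen inter-pair patterns. For each of the \(16\) possible values of the first four coordinates, I would then check that all four completions in the last two coordinates \((h_0,j_1)\) occur. This is a finite table check, \(16\times4\) cells, and I expect it to be the main, though routine, obstacle. The danger is a single missing cell, which would make the statement false. I would run it systematically, or by computer, rather than by inspection.

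For the count of \(19\), fix a nonzero \(\lambda\). Confusion under \(\ker\lambda\) is the equivalence that pairs \(E\) with \(E'\) whenever \(\sigma(E')=\sigma(E)+\lambda\). By injectivity each class has size at most two, so the number of distinct syndromes is \(22-N(\lambda)\). Here
\[
N(\lambda)=\#\bigl\{\{E,E'\}:\sigma(E)+\sigma(E')=\lambda\bigr\},
\]
and these unordered pairs are disjoint because translation by \(\lambda\) is an involution.

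I would then tabulate \(N(\lambda)\), using the same difference data as above, and show that \(\min_{\lambda\neq0}N(\lambda)=3\). This requires two things. First, exhibit a \(\lambda\) with \(N(\lambda)=3\); the five-generator variant with \(J_1\) dropped is the natural candidate. Its repeated rows \(\{Y_{S_i},Y_{K_i}\}\), \(\{X_A,Z_A\}\) and \(\{I,Y_A\}\) give \(3+1+1=5\) pairs, however, so that variant yields only \(17\) syndromes. The optimum must therefore come from a different \(\lambda\), found in the tabulation. Second, argue that no \(\lambda\) has \(N(\lambda)\le2\). A counting bound gives a starting point: there are \(\binom{22}{2}=231\) differences spread over \(63\) values, an average of about \(3.7\). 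That average does not by itself exclude \(N=2\), so the exhaustive tabulation is needed here as well.
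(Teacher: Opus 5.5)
Your proposal matches the paper's proof. Both use the same hyperplane/annihilator dualization to reduce the claim to $\mathcal D=\mathcal S^{*}\setminus\{0\}$, and both obtain the bound $22-N(\lambda)$ with $\min_{\lambda\neq 0}N(\lambda)=3$ from the same finite enumeration, which the paper likewise only asserts. One bookkeeping fix: for the inter-pair pattern $0000$ you need only the three nonzero completions, since the cell $(0000,00)$ is the zero functional. With that change the check goes through; for example, $\lambda=(0,0,1,1,0,0)$ has $N(\lambda)=3$ and merges $\{I,X_{S_2}\}$, $\{X_A,X_{K_2}\}$ and $\{Z_{S_2},Y_{S_2}\}$.
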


\begin{proof}
A subgroup \(\mathcal H\subseteq\mathcal S\) of rank five is a hyperplane,
hence \(\mathcal H=\ker f\) for some nonzero \(f\in\mathcal S^{*}\), and every
nonzero \(f\) arises from exactly one such subgroup. Two faults \(E,E'\) have
the same \(\mathcal H\)-syndrome exactly when \(\sigma(E)\) and \(\sigma(E')\)
agree on \(\ker f\), that is, when \(\sigma(E)+\sigma(E')\) annihilates
\(\ker f\). The functionals annihilating a hyperplane \(\ker f\) are precisely
\(\{0,f\}\), so, since \(\sigma\) is injective on \(\mathcal F\) by
Proposition~\ref{prop:complete-resolution},
\[
\mathcal H=\ker f
\ \text{identifies some pair of distinct faults}
\iff
f\in\mathcal D
:=
\left\{
\sigma(E)+\sigma(E')
\ :\
E\neq E'\in\mathcal F
\right\}.
\]
A rank-five subgroup with injective syndrome map therefore exists if and only
if \(\mathcal D\) omits at least one of the \(63\) nonzero functionals. Direct
enumeration of the \(\binom{22}{2}=231\) pairs shows that
\(\mathcal D=\mathcal S^{*}\setminus\{0\}\): every nonzero functional is
realized as a difference of two fault syndromes. Hence no rank-five subgroup
separates \(\mathcal F\), and the same enumeration gives \(19\) as the largest
number of distinct syndromes produced by such a subgroup.
\end{proof}

The obstruction is algebraic rather than informational. Five binary checks
provide \(32\) syndromes for \(22\) candidate faults, so the counting bound
\(\lceil\log_2|\mathcal F|\rceil=5\) is not what forbids a five-check solution;
what forbids it is that the available observables form a group, and the
syndromes they induce are consequently constrained.

\paragraph{The economical variant.}
Measuring \(J_1\) requires joint access to the source register and to both
members of a signal--key pair. Where that access is unavailable or too costly,
one may retain the five-generator set
\[
\mathcal G_{\mathrm{diag}}^{\,\mathrm{econ}}
=
\left\{
G_{12}^X,
G_{23}^X,
G_{12}^Z,
G_{23}^Z,
H_0
\right\},
\]
obtained by deleting the last column of
Table~\ref{tab:combined-single-pauli-syndrome}. Its resolution is then
characterized by the nontrivial equivalence classes
\[
\{I,Y_A\},
\qquad
\{X_A,Z_A\},
\qquad
\{Y_{S_i},Y_{K_i}\}\ \ (i=1,2,3),
\]
while all \(X\)- and \(Z\)-type faults on the storage qubits remain individually
resolved. These classes arise for different structural reasons: the transverse
\(Y\)-parity check cannot distinguish \(X_A\) from \(Z_A\); \(Y_A\) commutes with
the whole reduced set and is therefore syndrome-equivalent to the fault-free
case; and within a pair a \(Y\) fault is detected and classified by the
inter-pair checks but is invisible to \(H_0\).

It is worth noting that \(\mathcal G_{\mathrm{diag}}^{\,\mathrm{econ}}\) is not
the five-check set of highest resolution. By
Proposition~\ref{prop:minimality} some rank-five subgroups reach \(19\)
distinct syndromes against the \(17\) produced here; those subgroups, however,
merge classes such as \(\{I,X_{S_2}\}\) and \(\{X_A,X_{K_2}\}\), that is, they
confuse a genuine storage fault with the fault-free case. The classes left by
\(\mathcal G_{\mathrm{diag}}^{\,\mathrm{econ}}\) are of a different nature:
\(Y_A\) acts on a register that is discarded before normal redemption, and
\(\{Y_{S_i},Y_{K_i}\}\) at least localizes the affected pair and its Pauli
class. A five-check scheme with strictly finer partition can therefore be
strictly worse operationally, which is a concrete instance of the criterion
developed in Section~\ref{subsec:redemption-sufficiency}: diagnostic quality is
not measured by the number of faults separated.

\paragraph{Resolution against cost.}
The two variants differ in a way that is easy to quantify. The four inter-pair
generators have weight four and are supported on the storage register alone.
\(H_0\) also has weight four, but requires the source register together with one
representative of each pair. \(J_1\) has weight five and is the only check that
must act on both qubits of a pair simultaneously with \(A\):
\[
\begin{array}{c|c|c}
\text{generator} & \text{weight} & \text{support}\\
\hline
G^{X}_{i,i+1},\,G^{Z}_{i,i+1} & 4 & \text{two pairs, storage only}\\
H_0 & 4 & A + \text{one qubit per pair}\\
J_1 & 5 & A + \text{one full pair} + \text{one qubit per remaining pair}
\end{array}
\]
The additional resolution is therefore not free, and the relevant question is
not whether the finer partition is desirable in the abstract but whether the
distinctions it adds are the ones that matter operationally. That question is
taken up next.

\subsubsection{From Syndrome Information to Redemption Decisions}
\label{subsec:syndrome-redemption-decisions}

The diagnostic distinction that matters most for the normal redemption family \(Q_i=K\cup\{S_i\}\) is whether a fault is path-specific or affects the shared key block. A fault localized to \(S_i\) can be bypassed by excluding \(Q_i\); a fault on \(K_i\) cannot be bypassed merely by choosing another signal, although the larger access structure may permit recovery outside the normal custody convention described in Section~\ref{subsec:canonical-encoding}.

With the economical variant, the distinction is available for \(X\)- and
\(Z\)-type faults but not for \(Y\)-type ones. The residual class
\(\{Y_{S_i},Y_{K_i}\}\) is thus not an incidental ambiguity: it groups together
a fault that can be handled by excluding a single redemption path with one that
threatens all of them. A conservative redemption policy cannot treat pair
localization alone as evidence that the alternative paths are unaffected, and
must either fall back on the shared-key response or acquire further information.

The choice between the two variants is therefore redemption-oriented rather than purely metrological. Measuring \(J_1\) is worthwhile when resolving the signal--key location justifies its additional cost; otherwise a coarser syndrome can suffice whenever every residual class admits a common safe response, as formalized in Section~\ref{subsec:redemption-sufficiency}.

\subsection{Beyond Three Clones: Parity and the Cost of Source-Assisted Checks}
\label{sec:odd-even-diagnostics}

The three-clone instance analyzed above is the smallest setting in which the
diagnostic structure of the canonical protocol can be displayed completely, but
the mechanisms involved are not specific to \(n=3\). Two questions must be kept
apart. The first is which deterministic state-blind relations \emph{exist} for a
given clone multiplicity; the second is what those relations \emph{cost} to
measure. As shown below, the answer to the first question is independent of the
parity of \(n\), whereas the answer to the second is not.

For arbitrary \(n\), a convenient generating family of inter-pair checks is
\[
G_{i,i+1}^{X}
=
B_i^X B_{i+1}^X,
\qquad
G_{i,i+1}^{Z}
=
B_i^Z B_{i+1}^Z,
\qquad
i=1,\ldots,n-1.
\]
These \(2(n-1)\) observables test the consistency of the coherently repeated Bell
label along a connected chain of signal--key pairs. Under a single-Pauli-fault
assumption, the pattern of violated relations localizes the anomalous pair, while
the \(X\)- and \(Z\)-type check families distinguish the Pauli class exactly as in
the three-clone case. By Theorem~\ref{thm:storage-complete} this family is not
merely convenient but complete: it generates the entire storage-only relational
sector, for every \(n\) and for both parities. Equally, the within-pair
equivalences
\[
X_{S_i}\sim_{\mathrm{diag}}X_{K_i},
\qquad
Z_{S_i}\sim_{\mathrm{diag}}Z_{K_i},
\qquad
Y_{S_i}\sim_{\mathrm{diag}}Y_{K_i}
\]
are irreducible whenever the source register \(A\) is unavailable, again for both
parities. Any refinement requires \(A\).

By Proposition~\ref{prop:source-assisted-rank}, exactly two independent
deterministic checks involve \(A\), for every \(n\). Complete single-fault
resolution of the kind established in
Proposition~\ref{prop:complete-resolution} is therefore attainable at every clone
multiplicity, of either parity. What parity governs is the support and weight of
the cheapest generators realizing it.

\paragraph{Odd clone multiplicity.}
Consider a subset
\[
H=\{A\}\cup C,
\]
where \(C\) contains exactly one representative from each signal--key pair, and
let \(q\) be the number of signal qubits in \(C\). For odd \(n\) and even \(q\),
the reduced states characterized in~\cite{gianini2026full} contain, besides terms
depending on the \(y\)-component of the input Bloch vector, a transverse
\(Y\)-Pauli string with a state-independent coefficient. This yields a
deterministic observable of the form
\[
H_C
=
\eta_n
Y_A
\bigotimes_{R_j\in C}Y_{R_j},
\qquad
\eta_n=(-1)^{(n+1)/2},
\]
the sign being fixed by the phase convention of
Section~\ref{subsec:canonical-encoding} so that
\[
H_C|\Psi_{\mathrm{enc}}^{(n)}\rangle
=
|\Psi_{\mathrm{enc}}^{(n)}\rangle.
\]
For \(n=3\) this gives \(\eta_3=+1\), in agreement with the four checks
introduced above; the value alternates with \(n\bmod 4\), so that
\(\eta_5=-1\). There are \(2^{\,n-1}\) such observables, one for each even-\(q\)
subset, and for single-fault diagnosis it is sufficient to retain the all-key
member,
\[
H_0^{(n)}
=
\eta_n
Y_A Y_{K_1}\cdots Y_{K_n},
\]
of weight \(n+1\). Once the inter-pair syndrome has identified an affected pair
\(S_iK_i\) and classified the fault as \(X\) or \(Z\), the outcome of
\(H_0^{(n)}\) distinguishes its location within the pair: an \(X\)- or
\(Z\)-type fault on \(K_i\) anticommutes with \(H_0^{(n)}\), whereas the
corresponding fault on \(S_i\) does not. The set
\[
\left\{
G_{i,i+1}^{X},
G_{i,i+1}^{Z}
\right\}_{i=1}^{n-1}
\cup
\left\{
H_0^{(n)}
\right\}
\]
is thus the general-\(n\) form of the economical variant of
Section~\ref{subsec:diagnostic-equivalence-classes}: it resolves single \(X\)-
and \(Z\)-type storage faults completely, while \(Y_{S_i}\) and \(Y_{K_i}\)
remain equivalent because both commute with the transverse \(Y\)-check. The
second source-assisted generator, which removes that last ambiguity, has weight
\(n+2\); for odd \(n\) it may be taken in the form
\[
J_1^{(n)}
=
-\,X_A\,Y_{S_1}Z_{K_1}\,X_{K_2}\cdots X_{K_n},
\]
verified to be deterministic and independent of the preceding generators.

\paragraph{Even clone multiplicity.}
The first even case, \(n=4\), shows what changes. The storage-only signal
register is now maximally mixed,
\[
\rho_{S_1S_2S_3S_4}
=
\frac{I^{\otimes4}}{16},
\]
so the \(y\)-channel leakage present for odd clone multiplicity disappears. At
the same time, the corresponding source-containing subsets no longer exhibit the
state-independent transverse \(Y\)-sector. For example, for
\[
H=\{A,K_1,K_2,K_3,K_4\},
\]
one obtains~\cite{gianini2026full}
\[
\rho_H
=
\frac{1}{32}
\left(
I
-
z\,Y_A X_{K_1}X_{K_2}X_{K_3}X_{K_4}
+
x\,Z_A Y_{K_1}Y_{K_2}Y_{K_3}Y_{K_4}
-
y\,X_A Z_{K_1}Z_{K_2}Z_{K_3}Z_{K_4}
\right),
\]
in which every nonidentity Pauli component is weighted by an unknown Bloch
coordinate. An exhaustive search over all Pauli strings supported on \(\{A\}\cup
C\), with \(C\) containing one representative per pair, confirms that for even
\(n\) no deterministic observable of that support exists: there is no
counterpart of \(H_0^{(n)}\) at weight \(n+1\).

This does \emph{not} mean that the source-assisted sector is absent. By
Proposition~\ref{prop:source-assisted-rank} two independent source-assisted
generators exist for \(n=4\) as well, and an exhaustive search locates them: they
have minimum weight \(n+2=6\) and each acts on \(A\), on both members of one
signal--key pair, and on one representative of each remaining pair, exactly as
\(J_1\) does for \(n=3\). Adding them to the inter-pair family resolves every
single-qubit Pauli fault uniquely, so complete resolution is available for
\(n=4\) no less than for \(n=3\).

\paragraph{What parity actually controls.}
The comparison can now be stated precisely. For every \(n\), the storage-only
sector has rank \(2(n-1)\) and two further generators involve \(A\). For odd
\(n\), one of the two can be chosen of weight \(n+1\), acting on a single qubit
per pair; for even \(n\) no such choice exists and both have minimum weight
\(n+2\), necessarily acting on both members of some pair. The second generator
has weight \(n+2\) in either case:
\[
\begin{array}{c|c|c}
 & \text{first source-assisted generator} & \text{second}\\
\hline
n\ \text{odd}  & \text{weight }n+1,\ \text{one qubit per pair} & \text{weight }n+2\\
n\ \text{even} & \text{weight }n+2,\ \text{one full pair} & \text{weight }n+2
\end{array}
\]
\textit{Parity therefore governs the cost of relational diagnosis, not its
attainable resolution}. Odd clone multiplicity offers a cheap partial refinement
that even multiplicity does not; \textit{the complete refinement costs the same in both
cases}. The three-clone results should accordingly be read as the smallest
instance of a mechanism available at every multiplicity. The implications for the relation between confidentiality and
relational diagnosability are taken up in
Section~\ref{subsec:secrecy-diagnosability}.

The overall structure is summarized in Figure~\ref{fig:yk-relational-diagnostics}.


\begin{figure*}
\centering
\resizebox{\textwidth}{!}{%
\begin{tikzpicture}[
    x=1cm,
    y=1cm,
    font=\small,
    >=Latex,
    pair/.style={
        draw,
        rounded corners=2pt,
        minimum width=2.0cm,
        minimum height=1.15cm,
        align=center,
        inner sep=5pt
    },
    source/.style={
        draw,
        rounded corners=2pt,
        minimum width=2.0cm,
        minimum height=0.95cm,
        align=center,
        inner sep=5pt
    },
    check/.style={
        draw,
        rounded corners=2pt,
        minimum width=3.05cm,
        minimum height=0.90cm,
        align=center,
        inner sep=5pt
    },
    result/.style={
        draw,
        rounded corners=2pt,
        minimum width=3.30cm,
        minimum height=0.90cm,
        align=center,
        inner sep=5pt
    },
    note/.style={
        align=center,
        font=\scriptsize
    },
    paneltitle/.style={
        font=\bfseries,
        anchor=west
    },
    arrow/.style={
        ->,
        thick
    },
    rel/.style={
        <->,
        thick
    },
    dashedbox/.style={
        draw,
        dashed,
        rounded corners=2pt,
        minimum width=5.0cm,
        minimum height=1.0cm,
        align=center,
        inner sep=6pt
    }
]


\node[paneltitle] at (-8.2,5.75)
{(a) Coherently repeated Bell-label structure};

\node[source] (A) at (-6.55,4.35)
{\(\sigma_\mu^{(A)}|\psi\rangle_A\)};

\node[pair] (P1) at (-3.15,4.35)
{\(S_1\qquad K_1\)\\[1mm]
 \(|\phi_\mu\rangle_{S_1K_1}\)};

\node[pair] (P2) at (0.05,4.35)
{\(S_2\qquad K_2\)\\[1mm]
 \(|\phi_\mu\rangle_{S_2K_2}\)};

\node at (2.55,4.35) {\(\cdots\)};

\node[pair] (Pn) at (5.55,4.35)
{\(S_n\qquad K_n\)\\[1mm]
 \(|\phi_\mu\rangle_{S_nK_n}\)};

\draw[arrow]
(A.east) --
node[above, font=\scriptsize, align=center]
{coherent\\association}
(P1.west);

\draw[rel]
(P1.east) --
node[above, font=\scriptsize]
{same \(\mu\)}
(P2.west);

\draw[rel]
(P2.east) -- (1.65,4.35);

\draw[rel]
(3.45,4.35) --
node[above, font=\scriptsize]
{same \(\mu\)}
(Pn.west);

\draw[
    decorate,
    decoration={brace,mirror,amplitude=5pt},
    thick
]
(-4.45,3.48) -- (6.85,3.48)
node[midway,below=7pt,note]
{repetition-like structure over a four-valued Bell alphabet};

\node[note, align=center] at (0.8,2.72)
{the Bell label is carried by the correlation inside each \(S_iK_i\) pair,
not by \(S_i\) or \(K_i\) separately};


\draw[densely dotted] (-8.2,2.20) -- (8.2,2.20);


\node[paneltitle] at (-8.2,1.65)
{(b) Parity-independent inter-pair diagnosis};

\node[pair] (Q1) at (-6.65,0.20)
{\(S_1\qquad K_1\)\\
 \(B_1^X,\ B_1^Z\)};

\node[pair] (Q2) at (-3.45,0.20)
{\(S_2\qquad K_2\)\\
 \(B_2^X,\ B_2^Z\)};

\node at (-0.95,0.20) {\(\cdots\)};

\node[pair] (Qn) at (1.75,0.20)
{\(S_n\qquad K_n\)\\
 \(B_n^X,\ B_n^Z\)};

\draw[rel]
(Q1.east) --
node[above, font=\scriptsize, align=center]
{\(G_{12}^{X}\)\\[1mm]\(G_{12}^{Z}\)}
(Q2.west);

\draw[rel]
(Q2.east) -- (-1.50,0.20);

\draw[rel]
(-0.35,0.20) --
node[above, font=\scriptsize, align=center]
{\(G_{n-1,n}^{X}\)\\[1mm]\(G_{n-1,n}^{Z}\)}
(Qn.west);

\node[check] (arb) at (-2.45,-1.20)
{\(2(n-1)\) chain generators\\
available for arbitrary \(n\)};

\node[result] (pairloc) at (5.15,0.55)
{anomalous pair \(i\)\\
+\;Pauli class \(X/Z/Y\)};

\node[result] (ambiguity) at (5.15,-0.75)
{residual ambiguity\\
\(S_i\;?\;K_i\)};

\draw[arrow] (Qn.east) -- (pairloc.west);
\draw[arrow] (pairloc.south) -- (ambiguity.north);

\node[note] at (1.15,-1.98)
{compare Bell-label coordinates without resolving the common value of \(\mu\)};


\draw[densely dotted] (-8.2,-2.15) -- (8.2,-2.15);


\node[paneltitle] at (-8.2,-2.70)
{(c) Source-assisted refinement: same resolution, parity-dependent cost};


\node[font=\bfseries] at (-4.25,-3.48)
{Odd \(n\)};

\node[check] (Hodd) at (-4.25,-4.48)
{\(
H_0^{(n)}
=
\eta_n Y_A Y_{K_1}\cdots Y_{K_n}
\)\\
deterministic healthy value};

\node[note] at (-4.25,-5.29)
{state-independent transverse \(Y\)-sector};

\node[result] (oddinput) at (-6.35,-6.63)
{inter-pair syndrome:\\
pair \(i\) + \(X/Z\) class};

\node[result] (Sodd) at (-2.0,-6.05)
{\(S_i\) fault\\
\(H_0^{(n)}\) unchanged};

\node[result] (Kodd) at (-2.00,-7.20)
{\(K_i\) fault\\
\(H_0^{(n)}\) flipped};

\draw[arrow] (oddinput.east) -- ++(0.55,0) |- (Sodd.west);
\draw[arrow] (oddinput.east) -- ++(0.55,0) |- (Kodd.west);

\node[note, align=center] at (-4.60,-8.25)
{\(X_{S_i}\not\sim_{\mathrm{diag}}X_{K_i}\), \quad
 \(Z_{S_i}\not\sim_{\mathrm{diag}}Z_{K_i}\)\\
 \(Y_{S_i}\sim_{\mathrm{diag}}Y_{K_i}\) until the\\
 weight-\((n{+}2)\) generator is added};


\draw[densely dashed] (0,-3.15) -- (0,-8.35);

\node[font=\bfseries] at (4.15,-3.48)
{Even \(n\)};

\node[dashedbox, text width=4.5cm] (Heven) at (4.85,-4.48)
{no weight-\((n{+}1)\) reference:\\
cheapest source-assisted\\
check has weight \(n{+}2\)};

\node[note, align=center] at (4.15,-5.55)
{two source-assisted generators still exist};

\node[result] (eveninput) at (2.10,-6.38)
{inter-pair syndrome:\\
pair \(i\) + Pauli class};

\node[result] (evenamb) at (6.05,-6.38)
{full within-pair\\resolution, at weight \(n{+}2\)};

\draw[arrow] (eveninput.east) -- (evenamb.west);

\node[note, align=center] at (4.15,-7.65)
{the within-pair equivalences persist only if\\
the source-assisted generators are not measured};


\node[
    draw,
    rounded corners=2pt,
    minimum width=15.6cm,
    minimum height=1.05cm,
    align=center,
    inner sep=6pt
] (synthesis) at (0,-9.35)
{\textbf{Diagnostic resolution}
\(\;\longrightarrow\;\)
diagnostic equivalence classes
\(\;\longrightarrow\;\)
redemption decision};


\end{tikzpicture}%
}

\caption{Relational diagnostic structure of the canonical Yamaguchi--Kempf protocol. The coherently repeated Bell label yields \(2(n-1)\) storage-only inter-pair generators that localize an anomalous pair and identify its Pauli class while leaving a signal--key ambiguity. Two additional source-assisted generators remove that ambiguity at every clone multiplicity. Clone parity affects their minimum measurement weight, not the attainable single-fault resolution.}
\label{fig:yk-relational-diagnostics}
\end{figure*}

\subsection{Path Exclusion and Active Error Correction}
\label{subsec:path-exclusion-correction}

A diagnosed path-specific signal fault can be handled by excluding its redemption path when an alternative remains; a shared-key fault, or the absence of a safe alternative, can instead make active correction necessary.
The second is \emph{active correction}. This becomes relevant when no alternative path is available or when the diagnosed fault affects a shared key component. For the resolved \(X\)- and \(Z\)-type storage faults in Table~\ref{tab:combined-single-pauli-syndrome}, the syndrome specifies both the qubit and the Pauli class and therefore provides the information required, at the logical level, to apply the inverse Pauli operation. The unresolved \(Y_{S_i}/Y_{K_i}\) class requires more care.\footnote{ Although the symmetry of \textit{an isolated Bell pair} makes equal Pauli actions on its two members locally related, applying a correction to an arbitrarily chosen member of the pair need not restore the complete QECL state: a phase that is irrelevant for a Bell state with fixed label can become a \(\mu\)-dependent relative phase between coherent branches of the encoded superposition. Additional information capable of resolving the signal--key location, or an appropriately designed recovery operation acting on the full encoded structure, may therefore be required.}

Thus the operational purpose of the syndrome is to reduce uncertainty until a safe action --- path selection, correction, or acquisition of further diagnostic information --- becomes available. This criterion is formalized in the next section.

\section{A General Framework for One-Shot Relational Diagnosis}
\label{sec:relational-diagnosis}

The prototype and the canonical protocol motivate four questions that can be separated from their particular implementations: which encoding relations are state blind, which provide deterministic one-shot references, how finely their syndromes distinguish faults, and what resolution is sufficient for a safe redemption decision. We now formalize these questions and use them to define relational diagnosability as a protocol property.

\subsection{Coherent Operator--Label Associations and Relational Constraints}
\label{subsec:operator-label-relations}

A useful starting point is a structural feature shared by the two constructions studied above. In both cases, the encoding establishes a coherent association between transformations of the unknown logical state and labels represented redundantly in other degrees of freedom. A schematic form is
\[
V|\psi\rangle_L
=
\sum_{\lambda\in\Lambda}
c_\lambda
U_\lambda|\psi\rangle_L
\otimes
|\Lambda_\lambda\rangle_R,
\]
where \(V\) is the encoding isometry, \(U_\lambda\) acts on the logical information, and \(|\Lambda_\lambda\rangle_R\) represents the corresponding physical label structure. This expression is not intended as a universal normal form for QECL; it isolates a mechanism that is explicit in both examples considered here.

In the prototype the physical label is a coherently repeated binary value, whereas in the canonical protocol it is a four-valued Bell label represented relationally within each signal--key pair. In both cases the useful observables compare redundant label coordinates without resolving the common label itself; Sections~\ref{sec:prototype} and~\ref{sec:interpair-diagnostics} give the corresponding explicit constructions.

These examples suggest a general diagnostic principle. The information-bearing label itself need not be measured. Instead, one may search for constraints that are satisfied by every admissible label configuration. If an observable \(G\) satisfies
\[
G|\Lambda_\lambda\rangle_R
=
g|\Lambda_\lambda\rangle_R
\]
\textit{with the same} \(g\) \textit{for every} \(\lambda\), then the label value remains unresolved while the relation defining the valid encoding can still be tested.

Literal equality of repeated labels, as in the two examples above, is only one possible source of such constraints. More generally, a QECL encoding may impose parity, stabilizer-like, or other algebraic relations among its components. What matters diagnostically is that the relation be independent of the unknown logical state while remaining sensitive to relevant departures from the encoded structure.

The source-containing checks of Section~\ref{sec:input-diagnostics} also show that not every useful relation needs to act exclusively on a separate label register. The observables \(H_\alpha\) involve the transformed source qubit together with signal and key components, yet their healthy eigenvalues remain independent of the logical state. For this reason, the remainder of the framework is formulated directly in terms of the encoding isometry \(V\), rather than requiring a particular operator--label decomposition.

\subsection{Deterministic v. non-deterministic state-blind observables, informative observables}
\label{subsec:state-blind-pauli}

Let
\(
V:\mathcal H_L\rightarrow\mathcal H_R
\)
be the QECL encoding isometry from the \textit{logical} Hilbert space
\(\mathcal H_L\) to the \textit{physical resource space} \(\mathcal H_R\). Since \(V\) is an isometry,
\(
V^\dagger V=I_L
\).
If \(\rho\) denotes the unknown logical state before encoding, the
corresponding encoded state is
\[
\rho_{\mathrm{enc}}=V\rho V^\dagger.
\]

Consider a Pauli-string observable \(G\) acting on some subset of the
physical resource. Its two outcomes are associated with the following projectors\footnote{Being a tensor product of Pauli factors, \(G\) is Hermitian and
squares to the identity, \(G^2=I\), so its spectrum is contained in
\(\{+1,-1\}\) and a measurement of \(G\) has two outcomes. Writing the spectral
decomposition \(G=\Pi_+-\Pi_-\) together with the completeness relation
\(I=\Pi_++\Pi_-\), and adding or subtracting the two, gives the projectors
above; they are orthogonal and idempotent, again by \(G^2=I\).}
\[
\Pi_\pm=\frac{I\pm G}{2}.
\]
Their probabilities on the encoded state are
\[
p_\pm(\rho)
=
\operatorname{Tr}
\left(
\Pi_\pm\rho_{\mathrm{enc}}
\right)
=
\frac{1}{2}
\left[
1
\pm
\operatorname{Tr}
\left(
\rho V^\dagger G V
\right)
\right].
\]

The operator \(V^\dagger G V\) is the action induced by the physical observable \(G\) on the encoded logical subspace. Its expectation on \(\rho\) equals \(\langle G\rangle_{\rho_{\mathrm{enc}}}\), so it directly determines the measurement probabilities above.

The measurement outcomes are independent of the unknown logical state if
and only if
\[
V^\dagger G V=cI_L
\]
for some real \(c\in[-1,1]\). This motivates the following definition.

\begin{definition}[State-blind and deterministically state-blind observables]
\label{def:state-blind}
A Pauli-string observable \(G\) on the physical resource is \emph{state blind}
with respect to the encoding \(V\) when \(V^\dagger GV=c\,I_L\) for some real
\(c\in[-1,1]\). It is \emph{deterministically state blind}, and provides a \emph{deterministic
healthy reference}, when \(c=+1\) or \(c=-1\).\footnote{Both values are
deterministic; they differ only in which outcome is prescribed. Recall that
\(p_\pm=\tfrac12(1\pm c)\), so \(c=+1\) gives \(p_+=1\), \(p_-=0\) and \(c=-1\)
gives \(p_-=1\), \(p_+=0\): in either case one of the two outcomes occurs with
certainty on every healthy encoded resource, and the other never does. Any
intermediate value leaves both outcomes possible, since
\(\mathrm{Var}(G)=\langle G^{2}\rangle-\langle G\rangle^{2}=\langle
I\rangle-c^{2}=1-c^{2}\), using \(G^{2}=I\), and \(\langle I\rangle = \mathrm{Tr}(I\rho) = \mathrm{Tr}(\rho) = 1\), and this variance vanishes only at the two endpoints. The limiting case \(c=0\) is still state blind, but maximally random, and therefore useless
as a one-shot reference: observing \(-1\) is exactly what a fault-free resource
produces half of the time. Only the deterministic case supplies an outcome
whose violation is, in a single measurement, evidence of a fault.}
\end{definition}
The definition separates three cases relevant here. If \(V^\dagger G V=\pm I_L\), \(G\) supplies a deterministic healthy reference, as do the prototype parities, the Yamaguchi--Kempf inter-pair checks, and the source-containing deterministic checks. If \(V^\dagger G V=cI_L\) with \(|c|<1\), the observable remains state blind but is intrinsically probabilistic; the maximally mixed signal and key marginals give the important case \(c=0\). If \(V^\dagger G V\) is not proportional to the identity, the measurement generally probes the logical state and is informative rather than state blind.

State blindness is therefore a property of an observable, not necessarily of its supporting subsystem. An informative subsystem may contain deterministic state-blind observables, while a maximally mixed subsystem may make local Pauli measurements state blind without supplying any one-shot reference. Subsystem secrecy and diagnostic usefulness are distinct properties.

\subsection{Deterministic One-Shot References and Syndrome Extraction}
\label{subsec:oneshot-references}

The distinction between state blindness and deterministic state blindness is particularly important in the present setting because the diagnosis is \emph{one shot}. The diagnostic system is assumed to receive a single currently stored encoded resource, without a previous measurement record of the same quantum state and without access to repeated identical preparations of the unknown payload. A useful healthy reference must therefore be encoded in the structure of the resource itself.

For this purpose, the strongest case is
\[
V^\dagger G V=gI_L,
\qquad
g\in\{+1,-1\}.
\]
Since \(G\) is Hermitian and unitary, this condition implies
\[
GV=gV.
\]
The complete healthy code space is therefore contained in one eigenspace of \(G\). After absorbing the known sign \(g\) into the definition of the observable, every diagnostic check may be normalized so that
\[
G_aV=V.
\]
A commuting family
\(
\mathcal G=
\{G_1,\ldots,G_r\}
\)
then defines a healthy diagnostic syndrome
\(
\mathbf s_0=(+1,\ldots,+1)
\).

This structure also explains why the corresponding measurements can be performed nondestructively at the logical level. Measuring \(G_a\) applies the projectors
\[
\Pi_\pm^{(a)}
=
\frac{I\pm G_a}{2}.
\]
For every healthy encoded state,
\[
\Pi_+^{(a)}V\rho V^\dagger\Pi_+^{(a)}
=
V\rho V^\dagger,
\qquad
\Pi_-^{(a)}V\rho V^\dagger\Pi_-^{(a)}
=
0.
\]
Thus the measurement does not resolve the unknown logical state and does not further project a valid encoded resource. It reads an eigenspace label that is already fixed by the encoding.

The same property holds for the single-Pauli fault model considered in Sections~\ref{sec:prototype} and~\ref{sec:canonical-protocol}. Let \(E\) be a Pauli fault and define
\[
G_aE
=
\chi_a(E)EG_a,
\qquad
\chi_a(E)\in\{+1,-1\}.
\]
Then
\[
G_aEV
=
\chi_a(E)EV.
\]
The faulty state is therefore itself a definite eigenstate of every diagnostic generator, with syndrome
\[
\mathbf s(E)
=
\left(
\chi_1(E),\ldots,\chi_r(E)
\right).
\]
Syndrome extraction identifies which relational eigenspace the error has moved the resource into, without measuring the logical payload.
State blindness is therefore necessary for the type of non-leaky diagnosis considered here, whereas a deterministic healthy relation provides the one-shot reference that makes the measurement diagnostically actionable.

The condition \(V^{\dagger}GV=\pm I_L\) admits a sharper reading. For a Pauli string \(G\), it holds if and only if \(G\) acts as \(\pm\) the identity on the code space \({\mathcal C}=\mathrm{Im}\,V\), that is, if and only if \(G\) belongs to the stabilizer group \({\mathcal S}({\mathcal C})\) up to sign. Deterministic state-blind Pauli observables are therefore \emph{exactly} the stabilizer elements of the QECL encoding, and a one-shot relational diagnostic set is a chosen subgroup of that stabilizer. When the encoding is Clifford, as is the case for the constructions studied here, \({\mathcal S}({\mathcal C})\) is obtained in polynomial time by propagating the stabilizer generators of the input ancillas through the encoding circuit, and has rank \(N-k\) for an \([[N,k]]\) code.

Thus, \textit{the algebra is that of stabilizer-syndrome extraction in quantum error correction}; \textit{what differs is the operational use}. A conventional QEC syndrome is normally interpreted as information for restoring the encoded block. A QECL relational syndrome may instead be used to determine which redemption opportunity should be avoided or selected, without physically correcting the affected component. The distinctive questions in the QECL setting are consequently not \textit{which relations exist}, but \textit{which subgroup of \(\mathcal S(\mathcal C)\) is accessible} -- given which components remain available and can be measured jointly -- \textit{at what measurement weight}, and \textit{whether the resulting partition suffices} for a safe redemption decision.

\subsection{Diagnostic Coverage, Resolution, and Equivalence Classes}
\label{subsec:diagnostic-resolution}

The language used in this subsection deliberately parallels standard
syndrome-based quantum error correction: a family of commuting checks induces
a syndrome map over a specified fault set, and faults are distinguished
according to their commutation patterns with those checks. The difference lies
in the operational objective: in QECL, the syndrome need not identify a
correction that restores a single encoded block; it may instead provide only
the resolution required to select, exclude, or repair a future redemption
path.

A collection of deterministic checks defines a map from candidate faults to syndromes. Let
\(
\mathcal F
\)
be a fault set containing the identity \(I\), and let
\[
\mathbf s_{\mathcal G}:
\mathcal F
\rightarrow
\{+1,-1\}^{r}
\]
be the syndrome map generated by the commuting family \(\mathcal G\).

The first diagnostic property is \emph{detection}. A fault \(E\) is detected whenever
\[
\mathbf s_{\mathcal G}(E)
\neq
\mathbf s_{\mathcal G}(I).
\]
being \(\mathbf s_{\mathcal G}(I)\), the syndrome of the fault-free case. The corresponding diagnostic blind set is
\[
\mathcal B_{\mathcal G}
=
\left\{
E\in\mathcal F:
\mathbf s_{\mathcal G}(E)
=
\mathbf s_{\mathcal G}(I)
\right\}.
\]
Importantly, membership in \(\mathcal B_{\mathcal G}\) does not by itself imply diagnostic failure. Some undetected faults may be irrelevant to redemption, while others may be harmful. In the prototype, for example, signal \(Z\)-type errors are invisible to the \(Z\)-parity checks but do not compromise the controlled-\(Z\) redemption mechanism. Common-mode transformations can instead preserve the same relations while potentially affecting all redemption paths. The relevant question is therefore not only which faults are invisible, but \textit{what operational effect the invisible faults have}.

Beyond detection, the syndrome determines a partition of the fault set.

\begin{definition}[Diagnostic equivalence and resolution]
\label{def:diagnostic-resolution}
Two faults \(E,E'\in\mathcal F\) are \emph{diagnostically equivalent} with
respect to a diagnostic set \(\mathcal G\) when \(\mathbf s_{\mathcal G}(E)=
\mathbf s_{\mathcal G}(E')\). The induced partition of \(\mathcal F\) into
equivalence classes \([E]_{\mathcal G}\) is the \emph{diagnostic resolution}
of \(\mathcal G\); a diagnostic set is \emph{finer} than another when its
partition refines the other's.
\end{definition}

We write the resulting class as \([E]_{\mathcal G}\). Adding independent checks may refine the partition by separating previously indistinguishable faults.
This viewpoint unifies several levels of diagnosis that appeared in the examples.
\begin{itemize}
\item
\emph{Detection} only requires distinguishing a fault from the healthy class.
\item
\emph{Localization} requires a syndrome class to identify a sufficiently small physical region, such as a particular signal, signal--key pair, site, or redemption path.
\item
\emph{Error classification} requires the syndrome to determine a relevant error type, such as the \(X\), \(Z\), or \(Y\) Pauli class.
\item
\emph{Fault identification} is the strongest case, in which the syndrome class contains only one relevant candidate fault, up to physically irrelevant phases or other equivalences defined by the operational model.
\end{itemize}

The examples above realize these levels in different combinations: the prototype progresses from detection to signal localization, while the canonical storage-only checks identify a signal--key pair and Pauli class before source-assisted checks refine the within-pair location. The syndrome partition records these distinctions more precisely than a generic statement that a scheme ``detects errors.''

\subsection{Redemption-Oriented Diagnostic Sufficiency}
\label{subsec:redemption-sufficiency}

Fault identification is not necessarily the correct endpoint for QECL diagnosis. The purpose of diagnosis is ultimately to support a future redemption decision, and different physical faults may require the same operational response. Conversely, two faults that produce the same syndrome may represent an important ambiguity if they require different actions.

To formalize this distinction, associate with each candidate fault \(E\) a set
\(
\mathcal A(E)
\)
of diagnostic actions that are safe under that fault. An action may, for example, select a particular redemption set, exclude one or more paths, apply a specified correction, or request additional diagnostic information before redemption.

A diagnostic equivalence class
\[
C=[E]_{\mathcal G}
\]
is redemption-resolvable in the sense of the following definition.

\begin{definition}[Redemption-oriented diagnostic sufficiency]
\label{def:redemption-resolvable}
Associate with each candidate fault \(E\) the set \(\mathcal A(E)\) of
diagnostic actions that are safe under \(E\). A diagnostic equivalence class
\(C=[E]_{\mathcal G}\) is \emph{redemption-resolvable} when some action is safe
for every fault compatible with its syndrome, that is, when
\(\bigcap_{E'\in C}\mathcal A(E')\neq\varnothing\).
\end{definition}

Thus the syndrome need not determine which member of \(C\) occurred once it supports a common safe response. This distinguishes physical ambiguity from operational ambiguity. In the prototype, for example, localized \(X_{S_i}\) and \(Y_{S_i}\) faults can share the same action --- exclude \(Q_i\) --- while an undetected \(Z_{S_i}\) fault may safely share the healthy syndrome. By contrast, the canonical class \(\{Y_{S_i},Y_{K_i}\}\) combines a path-specific fault with a shared-key fault and may therefore remain redemption-relevant despite already specifying the affected pair and Pauli class.

Consequently, diagnostic performance is not determined solely by the number of uniquely identified faults: a coarser partition can be sufficient when each class has a common safe action, while a finer one can still be operationally inadequate.

For path-selection problems, the same idea can be expressed directly through the viable redemption sets. Let
\[
\mathcal V(E)
\subseteq
\Gamma_{\mathrm{red}}
\]
denote the redemption sets that remain viable under fault \(E\). A syndrome class \(C\) supports immediate path selection whenever
\[
\bigcap_{E\in C}
\mathcal V(E)
\neq
\varnothing.
\]
Any redemption set in this intersection is safe without determining which fault in \(C\) actually occurred. Active correction can be incorporated by enlarging the action space beyond direct path selection.

The QECL diagnostic objective is therefore to determine enough about the current fault state to identify a safe redemption action; detection, localization, and error classification are intermediate forms of resolution toward that goal.

\subsection{Relational Diagnosability as a Protocol Property}
\label{subsec:relational-diagnosability-framework}

The preceding notions lead naturally to \emph{relational diagnosability}. The term refers to the ability of a QECL encoding to provide state-independent relations among its physical components whose measurement distinguishes relevant departures from the healthy encoding at sufficient resolution for the intended redemption task.

Relational diagnosability is not an absolute property of an encoding in isolation. It is defined relative to at least four ingredients:
\[
\left(
V,
\Gamma_{\mathrm{red}},
\mathcal F,
\mathcal O_{\mathrm{adm}}
\right),
\]
where \(V\) is the \textit{encoding}, \(\Gamma_{\mathrm{red}}\) is the \textit{redemption structure}, \(\mathcal F\) is the \textit{fault model} of interest, and \(\mathcal O_{\mathrm{adm}}\) is the \textit{family of diagnostic observables} that are physically and operationally \textit{admissible}. The latter may depend on which components remain accessible, whether the source register is retained, and which distributed measurements can be implemented.

For a given setting, we call a commuting family
\[
\mathcal G
=
\{G_1,\ldots,G_r\}
\subseteq
\mathcal O_{\mathrm{adm}}
\]
a \textit{one-shot relational diagnostic set} in the sense of the following definition.

\begin{definition}[One-shot relational diagnostic set; relational diagnosability]
\label{def:relational-diagnosability}
Given \((V,\Gamma_{\mathrm{red}},\mathcal F,\mathcal O_{\mathrm{adm}})\), a
commuting family \(\mathcal G=\{G_1,\ldots,G_r\}\subseteq
\mathcal O_{\mathrm{adm}}\) is a \emph{one-shot relational diagnostic set} when
every generator is deterministically state blind,
\[
V^\dagger G_aV=g_aI_L,
\qquad
g_a\in\{+1,-1\},
\]
and its syndrome extraction is compatible with preservation of the encoded
logical information under \(\mathcal F\). The encoding is \emph{relationally
diagnosable} with respect to a specified redemption objective when such a
\(\mathcal G\) exists whose syndrome partition is redemption-resolvable
(Definition~\ref{def:redemption-resolvable}) over \(\mathcal F\).
\end{definition}

The checks therefore need not identify every error uniquely; they must separate
faults until each remaining ambiguity class admits an acceptable common action.

This definition makes relational diagnosability naturally graded rather than purely binary. Several aspects may be used to characterize its strength.

\emph{Diagnostic coverage} describes which redemption-relevant faults leave the healthy syndrome class and which remain relationally invisible.

\emph{Diagnostic resolution} describes the equivalence classes induced by the syndrome and whether they correspond to physical components, pairs, sites, Pauli classes, or individual faults.

\emph{Redemption resolution} asks whether the remaining equivalence classes are sufficiently fine to support path selection, exclusion, correction, or another required operational response.

\emph{Measurement support} characterizes how many and which distributed components must participate in each check. The prototype uses low-weight signal--signal relations, whereas the canonical protocol relies on four-body inter-pair checks and, for odd \(n\), source-containing transverse relations.

\emph{Check complexity} includes the number of independent generators required for a given resolution. In the three-clone canonical protocol, the two initially identified check families span a rank-five subgroup, while complete single-Pauli resolution requires the sixth independent generator.

\emph{Access dependence} records whether diagnostic resolution depends on resources that may not remain available throughout the storage lifecycle. The inter-pair Yamaguchi--Kempf checks require only stored signal and key components, whereas the within-pair refinement of Section~\ref{sec:input-diagnostics} requires access to \(A\).

\emph{Common-mode sensitivity} characterizes faults that preserve all tested relations. Relational checks are naturally sensitive to differential deviations, but collective transformations may lie in the blind class. Whether this is an important limitation depends on the fault model and on the degree of independence among the physical failure domains.

These dimensions make relational diagnosability a property distinct from clone multiplicity, secrecy, and ordinary recovery capability. In particular, the canonical odd--even comparison shows that eliminating a low-weight state-independent relation can leave attainable diagnostic resolution unchanged while increasing its measurement cost. Protocol comparison should therefore include not only access structure and leakage, but also which deterministic relational checks are accessible, at what cost, and whether their residual ambiguities are compatible with the intended redemption policy.

\section{Discussion}
\label{sec:discussion}

The preceding results allow the diagnostic structure of QECL to be compared with secrecy, classical diagnosis, local QEC, and distributed implementation constraints.

\subsection{Encrypted Redundancy as a Diagnostic Resource}
\label{subsec:encrypted-redundancy-resource}

Encrypted redundancy provides alternative redemption opportunities and, when the encoding imposes suitable relations among them, internal diagnostic references. Recovery redundancy and diagnostic redundancy therefore need not scale together: for the canonical protocol, Theorem~\ref{thm:storage-complete} fixes the storage-only relational rank at \(2(n-1)\) and leaves an irreducible within-pair ambiguity.

Behind this lies \textit{a structural feature of pure encodings that has no classical
counterpart}. For any bipartition \(B\mid B^{c}\) of the encoded resource, purity
imposes a two-sided constraint: at most one of the two parts can be authorized,
by no-cloning, and at least one of them must be informative, by the complementarity between recoverability and privacy~\cite{gianini2026full}, since a completely uninformative part forces its
complement to be authorized. Exactly one of three cases therefore occurs,
namely \(B\) authorized, \(B^{c}\) authorized, or both partially informative.
\textit{Neither half of the constraint holds classically}. 
In a Shamir scheme~\cite{shamir1979share} with
threshold \(k<n\), a set of \(k-1\) shares and its complement can both be
uninformative, which purity forbids; 
under plain replication~\cite{patterson1988raid,reich2006lockss} both can be qualified, which no-cloning forbids.

The consequence for diagnosis is that \textit{informativeness is not a property that
can be assigned to components and then accumulated over a subset}. In the
canonical protocol every single-qubit marginal is maximally mixed, and no qubit
holds any part of the protected state; the dependence on the input resides
entirely in the coherences between distinct branch labels. Whether a subset can
redeem the state accordingly depends on its internal correlation structure ---
whether it retains a complete signal--key pair, and whether it denies the
environment a complete record of the label --- and not on any aggregate over
the components it contains. This is not a dynamical statement: \(\rho_B\)
depends on \(B\) alone, and nothing propagates from one part to another. It is
a statement about what determines \(\rho_B\), namely a joint property invisible
at the level of the individual components.

Two features of the analysis follow from this. \textit{Redemption-path health} at
Level~4 of Section~\ref{subsec:diagnostic-hierarchy} \textit{cannot be obtained by
combining independent per-component health estimates}, as it could if the
components behaved classically; and, as will be observed in
Section~\ref{subsec:classical-diagnosis}, the classical lever of enriching the
test graph is unavailable here, the outcomes of relational checks being
algebraically constrained. Non-additivity of health and algebraic dependence of the checks are the same
non-locality seen from the side of the state and from the side of the
observables.  Together they indicate that
\textit{relational diagnosability is not a quantum restatement of classical fault
diagnosis but a subject with its own structure}: the classical theory supplies
the vocabulary and the operational questions, while what determines the answers
--- which relations exist, which of them can be tested jointly, and how their
outcomes constrain one another --- is fixed by the encoding rather than by the
diagnostic procedure laid over it.

\subsection{Secrecy, Diagnosability, and the Cost of Diagnosis}
\label{subsec:secrecy-diagnosability}

Section~\ref{sec:odd-even-diagnostics} shows that clone parity changes the cost of source-assisted diagnosis rather than its attainable single-fault resolution. For odd \(n\), the same parity-dependent Pauli structure that permits a restricted leakage channel also supplies a weight-\((n+1)\) transverse \(Y\)-relation; for even \(n\), both disappear, but two source-assisted generators still exist and complete resolution remains available at weight \(n+2\).

The relevant comparison is therefore between confidentiality and the \emph{cost} of diagnosis, not between confidentiality and diagnosability. The leakage channel and the low-weight diagnostic relation have a common structural origin, but neither causes or bounds the other.

Two consequences follow for protocol comparison, in view of QECL protocol design. 
\begin{itemize}
\item
One should be careful
\textit{not to identify stronger secrecy with universally better protocol behavior, nor weaker secrecy with better diagnosability}: what a leakage-free encoding gives up
here is measurement economy, which may or may not matter depending on whether
distributed joint measurements over a full signal--key pair are available in the
intended custody architecture. 
\item
\textit{Relational diagnosability} is genuinely a
third axis alongside \textit{leakage} and \textit{access structure}, but it must be assessed together with a cost model. A protocol-selection problem may need to consider not only what unauthorized subsets can learn and which subsets can redeem the state, but also which state-blind relations remain measurable, at what weight and support, what fault distinctions they enable, and whether that resolution is sufficient for the intended redemption policy.
\end{itemize}

The general shape of the design problem is therefore clearer than in a pure
trade-off picture, and also harder: since the admissible relations of a Clifford
QECL encoding can be enumerated exhaustively
(Section~\ref{subsec:oneshot-references}), what remains is to select within them a subgroup that is simultaneously \textit{cheap}, \textit{architecturally realizable}, and \textit{redemption-sufficient}. A systematic optimization of that selection lies beyond the scope of the present work.

\subsection{Relation to Classical Fault Diagnosis and Error Location}
\label{subsec:classical-diagnosis}

Diagnosability has a long history as a design criterion outside quantum
information, and two classical traditions are close enough to the present
setting to be worth making explicit.

\paragraph{System-level diagnosis.}
The theory of self-diagnosing multiprocessor systems, initiated by Preparata,
Metze and Chien~\cite{preparata1967connection}, models a system as a graph
whose vertices are units and whose edges are tests, and asks which test
assignments allow a faulty set to be identified. Its comparison-model
variant~\cite{sengupta1992comparison} is the closest analogue of the mechanism
studied here: a comparator receives the outputs of two units and reports
agreement or disagreement, without evaluating either output on its own. The
inter-pair observables of Section~\ref{subsec:interpair-checks} are comparators
in exactly this sense. They test whether the Bell labels of two signal--key
pairs agree, and deliberately do not resolve the common label; the resulting
syndrome localizes an anomalous pair in the same way that a comparison
syndrome localizes a faulty unit.

The analogy is instructive precisely where it fails. In the classical setting
the outcomes of different comparisons are independent observations, and
enriching the test graph increases diagnosability: this is why the choice of
connection assignment is a genuine design problem. Here the corresponding lever
does not exist. Comparing all pairs rather than the adjacent ones adds nothing,
since \(G^{X}_{13}=G^{X}_{12}G^{X}_{23}\) and likewise for the \(Z\)-type
family, and more fundamentally because by Theorem~\ref{thm:storage-complete}
the whole storage-only sector has rank \(2(n-1)\) whatever comparisons are
performed. The admissible measurements form a group, so their outcomes are
algebraically dependent and the diagnostic content of the storage register is
fixed by the encoding rather than by the test structure chosen on top of it.
The same phenomenon reappears in Proposition~\ref{prop:minimality}: five checks
are informationally sufficient to separate the fault set, yet no group of rank
five achieves it.

\paragraph{Partial identification as a design target.}
Classical diagnosis also anticipates the idea that pinpointing the fault need
not be the objective. The notion of \(t/s\)-diagnosability, and more generally
the graded measures of diagnosability introduced by Barsi, Grandoni and
Maestrini~\cite{barsi1976theory}, ask only that the faulty units be confined to
a bounded set. \textit{Redemption-oriented sufficiency}
(Definition~\ref{def:redemption-resolvable}) \textit{is a refinement of that idea in
which the admissible coarseness is fixed not by the cardinality of the residual
class but by the existence of a common safe action}, which is the operationally
meaningful criterion once alternative redemption paths exist.

\paragraph{Verification in quantum secret sharing.}
The closest quantum precedent is verifiable quantum secret
sharing~\cite{crepeau2002secure,lipinska2020verifiable}, where shareholders
test the consistency of their shares against the code structure without
learning the secret. The mechanism is a relative of the one used here, but it
answers a different question. \textit{Verification is adversarial, interactive and
placed at distribution time}: it establishes that the dealer prepared a valid
encoding and that no coalition of cheating parties can corrupt the outcome.
\textit{Relational diagnosis assumes a correct encoding and asks, after a period of custody and under a noise model rather than an adversary, which of the
surviving qualified sets remains trustworthy}. The objective is accordingly
selection among alternatives rather than validation of one, the procedure is non-interactive and one-shot, and the checks are non-destructive on the code space: they project onto an eigenspace in which a healthy encoded resource
already lies, whereas verification protocols typically consume randomization
and test shares.

\paragraph{Error location as an intermediate concept.}
The coding-theoretic counterpart is older still. \textit{Error-locating
codes}~\cite{wolf1963errorlocating} were introduced as a concept intermediate
between \textit{error detection} and \textit{error correction}: the codeword is divided into sub-blocks, and decoding identifies which sub-block contains the error without correcting it. This is precisely the middle level of the hierarchy of Section~\ref{subsec:diagnostic-hierarchy}, whose three levels --- protocol consistency, inter-component diagnosis, fault localization --- correspond approximately to detection, location, and correction. Path exclusion is the QECL instance of the intermediate case, with the signal--key pairs playing the role of sub-blocks and the redemption structure supplying the reason why locating a faulty block can be enough. The correspondence is not exact, and the discrepancy is in QECL's favour: an error-locating code identifies the sub-block but not the error, whereas the inter-pair checks return the
Pauli-error class along with the pair.

\paragraph{From errors to erasures.}
A further classical fact gives a quantitative reason to invest in relational
diagnosis even when correction is unavoidable. A code of distance \(d\)
corrects \(d-1\) erasures but only \(\lfloor (d-1)/2\rfloor\) errors, \textit{because a
located error is cheaper to repair than an unlocated one}. The same relation
holds for quantum codes. A Level-3 relational syndrome that localizes a fault
to a specific component therefore does not merely inform the choice of a
redemption path: \textit{it converts an error into an erasure} for the local
error-correction layer discussed in Section~\ref{subsec:local-qec-integration}, \textit{effectively doubling the number of faults that layer can absorb}.

\paragraph{What does not transfer.}
Two classical assumptions have no counterpart here and should not be imported.
First, system-level diagnosis in the PMC tradition allows faulty testers to
return arbitrary verdicts, which is what makes \(t\)-diagnosability with
\(t>1\) delicate; the present analysis assumes ideal syndrome extraction, and
relaxing that assumption is listed among the limitations in
Section~\ref{subsec:limitations}. Second, the classical units are independent
subsystems whose outputs can be inspected individually, whereas the objects
compared here are coherent relations whose individual resolution would disturb
the protected state. The comparison model is thus the right analogy for the
structure of the syndrome, not for the physics of its extraction.

\subsection{Diagnosis, Path Exclusion, and Active Correction}
\label{subsec:diagnosis-exclusion-correction}

The operational point established in Sections~\ref{subsec:syndrome-redemption-decisions} and~\ref{subsec:redemption-sufficiency} is that diagnostic resolution should be judged by the actions it enables. Physically distinct faults need not be separated when they admit the same safe redemption action, whereas a fine syndrome remains insufficient if one residual class requires incompatible responses. Relational diagnosis therefore supports path exclusion, active correction, or acquisition of further information according to the redemption structure and the resolution available at recovery time.

\subsection{Integration with Local Quantum Error Correction and Logical Lifting}
\label{subsec:local-qec-integration}
\label{subsec:qec-logical-lifting}

Relational diagnosis is complementary to, rather than a replacement for, local quantum error correction. A QEC layer protects individual physical resources or local logical blocks and produces syndromes describing errors within those blocks. QECL relational checks operate at a different level: they test consistency relations imposed by the encrypted-cloning structure across components that may belong to distinct protected blocks or distinct redemption paths. The two sources of information can therefore be combined, with local QEC providing component-level health information and relational measurements providing protocol-level consistency information.

The Pauli-string structure of the checks considered in this work also admits a natural logical lifting. Suppose that each physical component participating in a relational observable
\[
G=\bigotimes_j P_j
\]
is itself encoded in a quantum error-correcting code. Replacing each physical Pauli \(P_j\) by the corresponding logical operator \(\overline{P}_j\) gives
\[
\overline{G}
=
\bigotimes_j \overline{P}_j.
\]
Provided that the logical operators preserve the same commutation relations with the relevant encoded fault classes, the algebraic structure of the relational syndrome is retained. Thus the prototype parity checks, the Yamaguchi--Kempf inter-pair checks, and the source-containing transverse checks can in principle be interpreted directly at the logical level.

This layered view is operationally useful because local QEC and QECL diagnosis answer different questions. \textit{A local syndrome may indicate whether a particular stored block has experienced a correctable physical error}, whereas \textit{a relational syndrome assesses whether the collection of blocks still satisfies the structure required by the QECL encoding and its redemption paths}. Information from the two levels may also help resolve residual ambiguities that neither layer eliminates independently.

A complete treatment of this integration lies beyond the scope of the present work. In particular, the choice of local codes, code distances, syndrome-extraction schedules, decoders, logical-Pauli measurement procedures, and fault-tolerant implementations introduces a separate optimization problem. The present analysis establishes only the logical compatibility of relational diagnosis with locally encoded QECL resources; the systematic design of QEC strategies adapted to the structural roles of clones and keys is left for future work.

\subsection{Implementation Outlook for Distributed Relational Measurements}
\label{subsec:distributed-implementation}

At the logical level, the diagnostic observables considered in this work are
Pauli strings and can therefore be measured using standard ancilla-assisted
parity-extraction techniques. More generally, teleportation-based constructions
allow quantum gates and joint operations to be implemented between systems that
do not interact directly, using shared entanglement, local operations, and
classical communication~\cite{gottesman1999teleportation,
eisert2000nonlocal}. These ideas form part of the broader framework of
distributed and modular quantum computation, in which nonlocal operations are
compiled into local processing supplemented by networked entanglement
resources~\cite{cirac1999distributed,jiang2007distributed,
monroe2014modular}.

The main implementation challenge arises when the qubits participating in a
relational observable are physically distributed. In such settings, joint
parity or stabilizer measurements can be mediated by distributed entangled
resource states rather than by a single ancilla interacting sequentially with
all data qubits. Related constructions use Bell or multipartite GHZ resources
to realize nonlocal couplings and distributed stabilizer extraction across
separate processing modules~\cite{jiang2007distributed,
nickerson2013topological,singh2026modular}.

These costs depend strongly on the support and weight of the diagnostic observable. The prototype checks involve only pairs of signal components, whereas the Yamaguchi--Kempf inter-pair observables act on four qubits distributed across two signal--key pairs. Source-containing checks add the further requirement that the transformed source register \(A\) remain accessible and participate in the distributed measurement. Diagnostic resolution therefore has a physical cost that is not captured by the syndrome structure alone.

The implementation problem is also coupled to the distribution architecture. Components may reside in distinct failure domains, may be connected by heterogeneous quantum links, and may already be protected locally by QEC. The most appropriate realization of a relational check can consequently depend on whether entanglement is pre-shared, whether remote gates are native, which communication latencies are tolerable, and how syndrome extraction is coordinated with local protection and storage operations.
The resulting design space is part of the broader problem of distributed
quantum computation and has been studied from computational, architectural,
and communication perspectives~\cite{caleffi2024distributed}.

A complete resource-aware treatment of distributed relational measurements lies beyond the scope of the present work. In particular, we do not optimize entanglement consumption, communication rounds, ancilla preparation, gate depth, measurement scheduling, or fault-tolerant extraction circuits. The purpose here is only to establish that the proposed diagnostic observables admit standard Pauli-parity realizations in principle, while recognizing that their efficient compilation over a distributed QECL architecture constitutes a separate design problem.

\subsection{Limitations and Open Problems}
\label{subsec:limitations}
\label{subsec:limitations-open-problems}

The present analysis is intentionally restricted to the logical structure of
\textit{one-shot relational diagnosis}. Several assumptions that make the diagnostic
mechanisms transparent will need to be relaxed before the approach can be
assessed under realistic operating conditions.

First, the explicit syndrome analyses have focused primarily on single-qubit
Pauli faults and, in particular, on the \textit{single-fault regime}. This
assumption is not purely formal: when encrypted clones are deployed at
physically separated sites and exposed to largely independent local failure
domains, isolated faults are a natural leading-order model, while simultaneous
independent faults become progressively less likely as the redundancy is
distributed. This regime is sufficient to expose the diagnostic structure of
the prototype and of the canonical Yamaguchi--Kempf protocol, but it does not
characterize correlated, multi-component, or common-mode errors. The latter
remain especially relevant for relational diagnosis because a collective
transformation may preserve all tested consistency relations while still
affecting redemption. Extending the fault model and determining which
correlated errors remain distinguishable is therefore a natural next step.

Second, the diagnostic observables identified here should not be interpreted
as exhaustive. For the Yamaguchi--Kempf protocol we derived natural check
families from the repeated Bell-label structure and, for odd clone
multiplicity, from state-independent source-containing Pauli sectors. The existence question, however, is settled: by the identification of
deterministic state-blind observables with stabilizer elements
(Section~\ref{subsec:oneshot-references}), the admissible relations of a
Clifford QECL encoding can be enumerated exhaustively. What remains open is the
harder design problem of selecting, within that group, a subgroup that is
simultaneously of low weight, compatible with the custody architecture, and
sufficient for the intended redemption policy.

Third, diagnostic capability depends on which portions of the encoded resource
remain accessible. Inter-pair checks involve only storage components, whereas
the additional odd-\(n\) refinement considered in Section~\ref{sec:input-diagnostics}
requires access to the transformed source register \(A\). \textit{Relational
diagnosability is therefore not only a property of the abstract encoding, but
also of the custody architecture} and of the stage of the protocol at which
diagnosis is performed.

\textit{The measurements themselves have also been treated ideally}. We assumed that
the intended Pauli-string syndrome can be extracted without introducing
additional faults and without uncertainty in the measurement outcome. In a
fault-tolerant distributed realization, ancilla errors, imperfect entanglement,
communication failures, and faulty syndrome extraction can themselves corrupt
the diagnostic information. Incorporating such faults requires combining the
relational layer developed here with the local-QEC and distributed-measurement
issues outlined in Sections~\ref{subsec:qec-logical-lifting}
and~\ref{subsec:distributed-implementation}.

A related restriction concerns the origin of a violated relation rather than
its extraction. The syndrome states that a prescribed relation no longer holds;
it does not state why. A relational check cannot distinguish a fault produced
by the noise processes assumed here from a deliberate manipulation of a stored
component, and the fault model adopted throughout is explicitly
non-adversarial. Extending relational diagnosis to a setting in which some
custodians may be dishonest would require combining the present construction
with the guarantees of verifiable quantum secret
sharing~\cite{crepeau2002secure}, and is a natural direction for future work.

Finally, the framework of Section~\ref{sec:relational-diagnosis} characterizes
diagnostic coverage and resolution through the syndrome partition and its
relation to redemption actions, but \textit{it does not yet provide a quantitative
optimization theory}. Useful future questions include how to select a minimal
or cost-constrained set of checks, how to compare protocols with different
redemption structures, how to weight residual ambiguities by their operational
risk, and how diagnosability should be balanced against leakage, measurement
cost, and protection overhead.

These limitations point toward a broader design problem: \textit{QECL protocols need
not merely be diagnosed after they have been chosen; their encoding and access
structures may be designed so that useful state-blind relations are
available by construction}. Developing such diagnosis-aware QECL schemes lies
beyond the scope of the present work, but follows naturally from viewing
relational diagnosability as an independent protocol property.

\section{Conclusions and Future Directions}
\label{sec:conclusions}

This work has shown that redundancy in quantum encrypted cloning can serve \textit{not only as a resource for delayed recovery, but also as a resource for one-shot fault diagnosis}. The key idea is to exploit state-independent relations imposed by the encoding rather than attempting to inspect the unknown protected state itself.

We illustrated this principle first through a prototypical duplicated-syndrome construction and then in the canonical Yamaguchi--Kempf protocol. In the latter, inter-pair Bell-label consistency checks provide pair localization and Pauli-error classification, and we proved them complete for the storage register: they generate the whole group of deterministic state-blind observables supported there, so the residual within-pair ambiguity is irreducible once the transformed source qubit is discarded. Retaining that qubit contributes exactly two further independent checks, at every clone multiplicity, and these suffice to identify every single-qubit Pauli fault. The resulting six-generator scheme for three clones, and the cheaper five-generator variant obtained by omitting the heaviest check, differ precisely in the signal--key distinction that governs the choice between path exclusion and shared-key correction. Clone parity turns out to govern the weight at which the source-assisted checks can be realized, not the resolution they achieve.

These examples motivated \textit{a general framework for one-shot relational diagnosis} based on state-blind Pauli observables, deterministic healthy references, syndrome-induced fault partitions, and redemption-oriented diagnostic sufficiency. In this perspective, relational diagnosability is a protocol property distinct from clone multiplicity, secrecy, and conventional local error-correction capability.

Several directions remain open, as discussed in Section~\ref{subsec:limitations-open-problems}. %
These include extending the analysis beyond the single-fault Pauli regime and beyond its non-adversarial reading, in which a violated relation is attributed
to noise rather than to a dishonest custodian~\cite{crepeau2002secure}, integrating relational diagnosis with fault-tolerant local QEC and distributed syndrome extraction, and developing quantitative criteria for comparing diagnostic coverage, resolution, and cost.
Since deterministic state-blind observables coincide with the stabilizer elements of the encoding, the corresponding search problem is not one of existence but of selection: which subgroup is realizable within a given custody architecture, at which measurement weight, and with which redemption-relevant resolution. A further direction is to move from the analysis of existing QECL schemes to the design of protocols whose encoding and access structures provide useful diagnostic relations by construction.

Overall, the results suggest that encrypted redundancy should be regarded as a potentially active diagnostic resource. Designing QECL systems jointly for \textit{redemption}, \textit{secrecy}, \textit{protection}, and \textit{diagnosability} may therefore provide a more complete basis for reliable long-term distributed quantum custody.

\addtocontents{toc}{\protect\setcounter{tocdepth}{1}}

\appendix

\section{Explicit Derivations}
\label{app:derivations}

Hereafter are the algebraic verifications for the canonical Yamaguchi--Kempf encoding. 
\subsection{Conventions}
\label{app:conventions}

We write \(\sigma_0=I\), \(\sigma_1=X\), \(\sigma_2=Y\), \(\sigma_3=Z\), and
\(
|\Phi^+\rangle=\frac{|00\rangle+|11\rangle}{\sqrt2},
\text{ with }
|\phi_\mu\rangle_{S_iK_i}
=
\left(\sigma_\mu^{(S_i)}\otimes I^{(K_i)}\right)|\Phi^+\rangle 
\).

The \(n\)-clone encoded state is
\[
|\Psi_{\mathrm{enc}}^{(n)}\rangle
=
\frac12
\sum_{\mu=0}^{3}
\alpha_\mu^{-1}\,
\sigma_\mu^{(A)}|\psi\rangle_A
\otimes
\bigotimes_{i=1}^{n}
|\phi_\mu\rangle_{S_iK_i},
\qquad
\alpha_0=1,\ \ \alpha_1=\alpha_3=i,\ \ \alpha_2=-i^{\,n+1}.
\]
For \(n=3\) this gives \(\alpha_0=1\), \(\alpha_1=i\), \(\alpha_2=-1\),
\(\alpha_3=i\). Since the four states \(|\phi_\mu\rangle^{\otimes n}\) are
mutually orthogonal, the encoded state is normalized for every input, and the
map \(|\psi\rangle_A\mapsto|\Psi_{\mathrm{enc}}^{(n)}\rangle\) is an isometry.

\subsection{Bell-pair eigenvalues}
\label{app:bell-eigenvalues}

Write \(B^X=X\otimes X\) and \(B^Z=Z\otimes Z\) on a single signal--key pair.
Both stabilize \(|\Phi^+\rangle\), and since \(\sigma_\mu\) either commutes or
anticommutes with \(X\) and with \(Z\),
\[
B^X|\phi_\mu\rangle=x_\mu|\phi_\mu\rangle,
\qquad
B^Z|\phi_\mu\rangle=z_\mu|\phi_\mu\rangle,
\]
with
\[
\begin{array}{c|cccc}
\mu & 0\ (I) & 1\ (X) & 2\ (Y) & 3\ (Z)\\
\hline
x_\mu & +1 & +1 & -1 & -1\\
z_\mu & +1 & -1 & -1 & +1
\end{array}
\]
The four Bell states are therefore the joint eigenstates of \(B^X\) and \(B^Z\),
and the pair \((x_\mu,z_\mu)\) is a faithful two-bit coordinate for the label
\(\mu\). The eigenvalues do not depend on the pair index, which is what makes
Proposition~\ref{prop:interpair-deterministic} work: on each branch of the
encoding all pairs carry the same \(\mu\), hence
\(G_{ij}^X\) has eigenvalue \(x_\mu^2=+1\) and \(G_{ij}^Z\) eigenvalue
\(z_\mu^2=+1\), independently of \(\mu\) and therefore of the input state.

Two further eigenvalues are used in the main text. Since
\((B^X)(B^Z)=(XZ)\otimes(XZ)=(-iY)\otimes(-iY)=-\,Y\otimes Y\), the transverse
observable \(Y\otimes Y\) on a single pair has eigenvalue \(-x_\mu z_\mu\), which
\emph{does} depend on \(\mu\); only products over an even number of pairs, or
combinations involving the source register, can be label independent.

\subsection{Relations among the source-containing checks}
\label{app:transverse-identities}

For \(n=3\) the four transverse observables of
Section~\ref{subsec:source-transverse-checks} are
\[
H_0=Y_AY_{K_1}Y_{K_2}Y_{K_3},
\quad
H_{12}=Y_AY_{S_1}Y_{S_2}Y_{K_3},
\quad
H_{13}=Y_AY_{S_1}Y_{K_2}Y_{S_3},
\quad
H_{23}=Y_AY_{K_1}Y_{S_2}Y_{S_3}.
\]
Using \(XZ=-iY\) on each qubit,
\[
G_{12}^XG_{12}^Z
=
(X_{S_1}X_{K_1}X_{S_2}X_{K_2})(Z_{S_1}Z_{K_1}Z_{S_2}Z_{K_2})
=
(-i)^4\,Y_{S_1}Y_{K_1}Y_{S_2}Y_{K_2}
=
Y_{S_1}Y_{K_1}Y_{S_2}Y_{K_2},
\]
so that, using \(Y_{K_1}^2=Y_{K_2}^2=I\),
\[
G_{12}^XG_{12}^ZH_0
=
Y_AY_{S_1}Y_{S_2}Y_{K_3}
=
H_{12},
\]
and identically \(G_{23}^XG_{23}^ZH_0=H_{23}\). For the remaining one, the
supports of \(G_{23}^X\) and \(G_{12}^Z\) overlap on \(S_2K_2\) in two
anticommuting factors, so the two commute, and
\[
G_{12}^XG_{23}^XG_{12}^ZG_{23}^ZH_0
=
Y_{S_1}Y_{K_1}Y_{S_3}Y_{K_3}\,H_0
=
Y_AY_{S_1}Y_{K_2}Y_{S_3}
=
H_{13}.
\]
Multiplying the three relations and using \(\left(G^{X}\right)^2=
\left(G^{Z}\right)^2=I\) gives \(H_0H_{12}H_{13}H_{23}=I\), so that only one of
the four is independent modulo the inter-pair generators. All signs above are
\(+1\); the minus signs appearing in earlier versions of this construction
originate in a sign misprint in the reduced state
\(\rho_{AK_1K_2K_3}\), corrected in~\cite{gianini2026full}.

\subsection{The second source-assisted generator}
\label{app:sixth-generator}

For \(n=3\) the encoding isometry maps one logical qubit into seven physical
qubits, so its image is a \([[7,1]]\) code; being Clifford, the code is a
stabilizer code and its stabilizer group has rank \(6\). The four inter-pair
checks together with \(H_0\) generate a subgroup of rank \(5\), so exactly one
further independent generator exists, in agreement with
Proposition~\ref{prop:source-assisted-rank} and
Theorem~\ref{thm:storage-complete}.

An exhaustive search over the \(4^7\) Pauli strings identifies \(64\)
deterministic state-blind observables, of which \(32\) lie outside the span of
the five checks above; \(24\) of these have the minimum weight five. Each acts on
\(A\), on both members of one distinguished pair through \(Y_SZ_K\) or
\(Z_SY_K\), and on one representative of each remaining pair. The choice adopted
in the main text is
\[
J_1=-\,X_A\,Y_{S_1}Z_{K_1}\,X_{K_2}\,X_{K_3},
\]
whose healthy eigenvalue is \(+1\) and which commutes with the other five
generators. For general odd \(n\) the analogous string
\[
J_1^{(n)}=-\,X_A\,Y_{S_1}Z_{K_1}\,X_{K_2}\cdots X_{K_n}
\]
is deterministic and independent of
\(\{G^{X}_{i,i+1},G^{Z}_{i,i+1}\}\cup\{H_0^{(n)}\}\); for even \(n\) no string of
this particular form is deterministic, and the two source-assisted generators
must instead be taken at weight \(n+2\) with one distinguished full pair, as
described in Section~\ref{sec:odd-even-diagnostics}.

\section*{Acknowledgments}
The authors used generative-AI tools (ChatGPT Sol 5.6 and Claude Opus 5) for language assistance and algebraic cross-checks. All derivations and numerical values were independently re-derived and verified by the authors, who reviewed the citations and the final manuscript text and assume full responsibility for the work. 


\bibliographystyle{unsrt}
\bibliography{refs_updated}

\end{document}